\newif\ifconference
\title[Fairness and efficiency trade-off in two-sided matching]{Fairness and efficiency trade-off in two-sided matching}
\author{Sung-Ho Cho, Kei Kimura, Kiki Liu, Kwei-guu Liu, Zhengjie Liu, Zhaohong Sun, Kentaro Yahiro, and Makoto Yokoo
}
\affiliation{
  \institution{Kyushu University, Fukuoka, Japan}
  \city{}
  \country{}}
\email{{cho@agent.,kkimura@,kiki@agent.,liu@agent.,zhengjie@agent.,zhaohong.sun@,yahiro@agent.,yokoo@}inf.kyushu-u.ac.jp}
\begin{abstract}
The theory of two-sided matching has been extensively developed and
applied to many real-life application domains.
As the theory has been applied to increasingly
diverse types of environments, researchers and practitioners have
encountered various forms of distributional constraints.
As a mechanism can handle a more general class of constraints, we can assign
students more flexibly to colleges to increase students' welfare.
However, it turns out that there exists a trade-off between students' welfare
(efficiency) and fairness (which means no student has justified envy).
Furthermore, this trade-off becomes sharper as the class of constraints becomes
more general.
The first contribution of this paper is to clarify the boundary
on whether a strategyproof and fair mechanism can satisfy certain
efficiency properties for each class of constraints. 
Our second contribution is to establish 
a weaker fairness requirement called \emph{envy-freeness up to $k$ peers} (EF-$k$),
which is inspired by a similar concept used in the fair division of indivisible items.
EF-$k$ guarantees that each student has justified envy towards at most
$k$ students.
By varying $k$, EF-$k$ can represent different levels of fairness. 
We investigate theoretical properties associated with EF-$k$.
Furthermore, we develop two contrasting strategyproof mechanisms 
that work for general hereditary constraints, i.e.,
one mechanism can guarantee a strong efficiency requirement,
while the other can guarantee EF-$k$ for any fixed $k$. 
We evaluate the performance of these mechanisms through computer simulation.
\end{abstract}
\keywords{two-sided matching, strategyproof mechanism, mechanism design}
\newcommand{\BibTeX}{\rm B\kern-.05em{\sc i\kern-.025em b}\kern-.08em\TeX}
\def\Mnatural{M$\sp{\natural}$}
\newcommand{\cvec}[1]{e_{#1}}
\newcommand{\red}[1]{\textcolor{red}{ #1}}
\newcommand{\blue}[1]{\textcolor{blue}{ #1}}
\renewcommand{\ALG@name}{Mechanism}
\gdef\@copyrightpermission{
	\begin{minipage}{0.3\columnwidth}
		\href{https://creativecommons.org/licenses/by/4.0/}{\includegraphics[width=0.90\textwidth]{figs/by.eps}}
	\end{minipage}\hfill
	\begin{minipage}{0.7\columnwidth}
		\href{https://creativecommons.org/licenses/by/4.0/}{This work is licensed under a Creative Commons Attribution International 4.0 License.}
	\end{minipage}
	\vspace{5pt}
}
\begin{document}


\pagestyle{fancy}
\fancyhead{}


\maketitle 

\section{Introduction}
\label{sec:intro}

The theory of two-sided matching has been developed
and 
has been applied to many real-life application domains
(see 
\citet{Roth:CUP:1990} for a comprehensive survey in
this literature).
It has attracted considerable attention from AI researchers%
~\cite{aziz2022stable,Haris19matching,hosseini2015manipulablity,IsmailiHZSY19,kawase2017near,Yahiro18,suzuki2022strategyproof}.
As the theory has been applied to increasingly
diverse types of environments, researchers and practitioners have
encountered various forms of distributional constraints
(see 
\citet{ABY:aaai22:survery} for 
a comprehensive survey on various distributional constraints). 
There exist three representative classes of constraints.
First,
the standard model of two-sided matching
considers
only the maximum quota of each individual college~%
\cite{Roth:CUP:1990}, which we call maximum quotas constraints.%
\footnote{Although our paper is described
in the context of a student-college matching problem, 
the obtained result is applicable to matching problems in general.}

More general classes of constraints 
are hereditary constraints
\cite{aziz:cutoff:2021,goto:17,kamakoji-concepts} and 
hereditary \Mnatural-convex set constraints \cite{kty:2018}.
An \Mnatural-convex set is a 
discrete counterpart of a convex set in a continuous domain. 
Hereditary constraints 
require 
that if a matching between students and colleges 
is feasible, then any matching that places
weakly fewer students at each college is also feasible.

As a mechanism can handle a more general class of constraints, we can incorporate more complex constraints required for real-life application domains. 
Also, we obtain more flexibility in
assigning students to colleges.
As a result, we can expect that students' welfare can be increased in
the obtained matching. 
Furthermore, maximum quotas constraints can be considered to be too \emph{restrictive}. In a real-life situation, it is common that some flexibility exists in determining the capacity of each college, i.e., we can increase the maximum quota of a college if it turns out to be very popular (say, by assigning additional resources). 
Such flexibility can be modeled naturally using a more general class of constraints. 

    In this paper, we focus our attention on strategyproof mechanisms, which guarantee that
    students have no incentive to misreport their preference over colleges.
    From a theoretical standpoint, if we are interested in a property achieved in
    dominant strategies,
    strategyproof mechanisms can be exclusively considered 
    without any loss of generality, as supported by the well-known revelation principle~\cite{gibbard:1973}.
    This principle states that if a certain property is satisfied in a dominant
    strategy equilibrium using a mechanism,
    it can also be achieved through a strategyproof mechanism.
    Strategyproof mechanisms are not only theoretically significant but also practically
    beneficial, as students do not need to speculate about the actions
    of others to achieve desirable outcomes; they only need to report their preferences truthfully.

Most existing works in two-sided matching require that
the obtained matching must be fair, i.e., 
no student has justified envy.
However, just requiring fairness is not sufficient
since the matching that no student is assigned to any college
is fair; we should achieve some requirement
on students' welfare (which is referred to as \emph{efficiency} in
economics) in conjunction with fairness.
In the standard maximum quotas model,
the renowned Deferred Acceptance mechanism (DA)~\cite{Gale:AMM:1962}
can achieve an efficiency property called
\emph{nonwastefulness} in conjunction with fairness.
A matching satisfying fairness and nonwastefulness together is called \emph{stable}.

However, when some distributional constraints are imposed,
there exists a trade-off between fairness and efficiency/students'
welfare. 
In particular, 
\citet{cho:2022} show that no strategyproof mechanism satisfies
fairness and a weaker efficiency property called
weak nonwastefulness under hereditary constraints. 

The first goal of this paper is to clarify the tight boundaries 
on whether a strategyproof and fair mechanism can satisfy certain
efficiency properties for each class of constraints
(see Table~\ref{tbl:properties} in Section~\ref{sec:impossibility}).
In particular, we show that 
under hereditary constraints, no strategyproof mechanism can
simultaneously satisfy fairness and a very weak efficiency requirement
called \emph{no vacant college property}.

This impossibility result illustrates a dilemma:
we are expanding/generalizing the classes of constraints 
in the hope that we can improve students' welfare.
However, if we require strict fairness, we cannot guarantee
a very weak requirement of students' welfare under general hereditary
constraints. 
Given this dilemma, our next goal is to establish
a weaker fairness requirement. In this paper, we propose a novel
concept called \emph{envy-freeness up to $k$ peers} (EF-$k$).
This concept is 
inspired by a criterion called envy-freeness up to $k$ items,
which is commonly used in the fair division of indivisible items%
~\cite{budish2011combinatorial}.
EF-$k$ guarantees that each student has justified envy towards at most
$k$ students.
By varying $k$, EF-$k$ can represent different 
levels of fairness. 
On one hand, EF-$0$ is equivalent to standard fairness. On the other hand,  any matching
satisfies EF-($n-1$), where $n$ is the number of students. 
To the best of our knowledge, this paper is the first to address the
relaxed notion of fairness in two-sided, many-to-one matching. 

We show that there exists a case that no matching is nonwasteful and
EF-$k$ for any $k<n-1$, and checking whether a nonwasteful and
EF-$k$ matching exists or not is NP-complete.
Then, we develop two contrasting strategyproof mechanisms 
that work
for general hereditary constraints.
One is based on the Serial Dictatorship mechanism (SD) \cite{goto:17},
which utilizes an optimal master-list (where students are assigned
in its order) that minimize $k$ based on colleges'
preferences, 
such that the obtained matching is guaranteed to satisfy EF-$k$.
Although $k=n-1$ holds in the worst case, we experimentally
show that $k$ tends to be much smaller when colleges' preferences are
similar.
The other one is based on the Sample and Deferred Acceptance mechanism
(SDA) \cite{liu:spr2023}, which is developed
for a special case of hereditary constraints called
student-project-resource matching-allocation problem. 
This mechanism satisfies EF-$k$ for any given $0\leq k < n-1$.
We extend SDA such that the obtained matching satisfies
no vacant college property under
a mild assumption. We experimentally show that this mechanism can significantly improve students' welfare compared to a fair (EF-$0$) mechanism 
even when $k$ is very small.

\ifconference
Due to the space limitation, some of the proofs are omitted, which can be found in the full version \cite{cho:2024}.
\fi
\section{Model}
\label{sec:model}
A matching market under distributional constraints is given
by $I=(S, C, X, \succ_S, \succ_C, {f})$.
The meaning of each element is as follows. 
\begin{itemize}
 \item $S=\{s_1, \ldots, s_n\}$ is a finite set of students.
 Let $N$ denote $\{1, 2, \ldots, n\}.$
 \item $C=\{c_1, \ldots, c_m\}$ is a finite set of colleges. 
Let $M$ denote $\{1, 2, \ldots, m\}$.
 \item $X \subseteq S\times C$ is a finite set of contracts.
Contract $x = (s, c) \in X$ 
represents the matching between student $s$  and college $c$.
\item For any $Y \subseteq X$, 
let $Y_s:=\{(s, c) \in Y \mid c \in
C\}$ and $Y_c:=\{(s, c) \in Y \mid s \in S\}$ denote the sets of contracts in $Y$ that involve $s$ and $c$, respectively.
\item 
$\succ_S = (\succ_{s_1}, \ldots, \succ_{s_n})$ is a profile of
the students' preferences. 
For each student $s$,  $\succ_{s}$ represents the
 preference of $s$
over $X_s \cup\{(s,\emptyset)\}$, where $(s, \emptyset)$ represents an outcome such that $s$ is unmatched. We assume $\succ_s$ is strict for each $s$.
We say contract $(s,c)$ is \emph{acceptable} for $s$ if
$(s, c) \succ_s (s, \emptyset)$ holds. 
We sometimes use notations like $c \succ_s c'$ instead of $(s,c) \succ_s (s,c')$.
\item $\succ_C = (\succ_{c_1}, \ldots, \succ_{c_m})$ is a profile of
the colleges' preferences. 
For each college $c$,  $\succ_{c}$ represents the preference of $c$
over $X_c \cup\{(\emptyset,c)\}$, where $(\emptyset,c)$ represents an outcome such that $c$ is unmatched. We assume $\succ_c$ is strict for each $c$.
We say contract $(s,c)$ is \emph{acceptable} for $c$ if
$(s, c) \succ_c (\emptyset, c)$ holds.
We sometimes write $s \succ_c s'$ instead of $(s,c) \succ_c (s',c)$.
\item ${f}: {\mathbf{Z}}_+^m \rightarrow \{-\infty, 0\}$ is a function that represents distributional constraints, 
where $m$ is the number 
of colleges and ${\mathbf{Z}}_+^m$ is 
the set of vectors of $m$ non-negative integers.
For $f$, we call a family of vectors 
$F=\{\nu \in {\mathbf{Z}}_+^m \mid f(\nu)=0 \}$ 
\emph{induced vectors} of $f$. 
\end{itemize}

We assume each contract $x$ in $X_c$ is acceptable for $c$.
This is without loss of generality because if 
some contract is unacceptable for a college, we can assume
it is not included in $X$.

We say $Y\subseteq X$ is 
a \emph{matching}, 
if for each $s \in S$, 
either (i) $Y_s=\{x\}$ and $x$ is acceptable for $s$,
or (ii) $Y_s = \emptyset$ holds.

For two $m$-element vectors $\nu, \nu' \in 
\mathbf{Z}_+^{m}$,
we say $\nu \leq \nu'$ if for all $i \in M$,
$\nu_i \leq \nu'_i$ holds.
We say $\nu < \nu'$ if $\nu \leq \nu'$ and $\nu \neq \nu'$ hold.
Also, let $|\nu|$ 
denote the $L_1$ norm of $\nu$, i.e., 
$|\nu|= \sum_{i\in M} \nu_i$. 

\begin{definition}[feasibility with distributional constraints]
Let $\nu$ be a vector of $m$ non-negative
integers. We say $\nu$ 
is \emph{feasible} in $f$ if ${f}(\nu) = 0$.
For $Y \subseteq X$, let us define $\nu(Y)$ as
$(|Y_{c_1}|, |Y_{c_2}|, \ldots, |Y_{c_m}|)$.
We say $Y$ is \emph{feasible} (in $f$) if $\nu(Y)$ is feasible in ${f}$.
\end{definition}
We assume $F$ is bounded, i.e., $|F|$ is finite. 
This is without loss of generality because 
we can assume each college $c_i$ can accept at most $|X_{c_i}|$ students, i.e., 
$f(\nu)=-\infty$ holds when $\exists i \in M, \nu_i > |X_{c_i}|$. 

%
Let us first introduce a very general class of constraints called
\emph{hereditary} constraints. 
Intuitively, heredity means that 
if $Y$ is feasible in $f$, 
then any subset $Y' \subset Y$ is also feasible in $f$. 
Let $\cvec{i}$ denote an $m$-element unit vector, where its
$i$-th element is $1$ and all other elements are $0$.
Let $\cvec{0}$ denote an $m$-element zero vector
$(0, \ldots, 0)$.
\begin{definition}[heredity]
We say a family of $m$-element vectors
$F\subseteq \textbf{Z}^m_+$ is 
\emph{hereditary} if $\cvec{0} \in F$ and 
for all 
$\nu, \nu' \in \mathbf{Z}_+^{m}$, 
if $\nu > \nu'$ and $\nu \in F$, 
then $\nu' \in F$ holds. 
We say ${f}$ is \emph{hereditary}
if its induced vectors 
are hereditary. 
\end{definition}

\citet{kty:2018} show that when 
$f$ is hereditary,  and its induced vectors satisfy 
one additional condition called \emph{\Mnatural-convexity}, 
there exists a general mechanism called Generalized Deferred Acceptance
mechanism (GDA), which satisfies several desirable properties.\footnote{%
To be more precise, 
\citet{kty:2018} show that to apply their framework, it is
necessary that the family of feasible matchings forms a matroid. 
When distributional constraints are defined on $\nu(Y)$ 
rather than on contracts $Y$, 
the fact that the family of feasible contracts forms a matroid
corresponds to the fact that (i) the family of feasible vectors forms an
\Mnatural-convex set, and (ii) it is
hereditary \cite{MS:dca:1999}.}

Let us formally define an \Mnatural-convex set. 
\begin{definition}[\Mnatural-convex set]
\label{def:mnatural}
We say a family of vectors $F \subseteq \mathbf{Z}^{m}_+$ 
forms an \emph{\Mnatural-convex set}, if for all $\nu, \nu' \in F$, 
for all $i$ such that $\nu_i > \nu'_i$, 
there exists $j \in \{0\}\cup \{k \in M \mid \nu_{k} <
\nu'_{k}\}$
such that $\nu - \cvec{i} + \cvec{j} \in F$ and 
$\nu' + \cvec{i} - \cvec{j} \in F$ hold.
We say ${f}$ satisfies \Mnatural-convexity 
if its induced vectors form an \Mnatural-convex set.
\end{definition}

An \Mnatural-convex set can be considered as a
discrete counterpart of a convex set in a continuous 
domain. Intuitively, Definition~\ref{def:mnatural} means that for two feasible vectors $\nu$ and $\nu'$, there exists another feasible vector, which is one step closer starting from $\nu$ toward $\nu'$, and vice versa.
An \Mnatural-convex set has been studied
extensively in discrete convex analysis, a branch of discrete mathematics. 
Recent advances in discrete convex analysis have found many applications
in economics (see the survey paper by \citet{murota:dca:2016}). 
Note that heredity and \Mnatural-convexity are independent properties. 

\citet{kty:2018} show that various real-life distributional constraints
can be represented as a hereditary \Mnatural-convex set. 
The list of applications includes matching markets
with regional maximum quotas \cite{kamakoji-basic}, individual/regional minimum quotas \cite{fragiadakis::2012,goto:17}, 
diversity requirements in school choice \cite{ehlers::2012,kurata:jair2017}, distance constraints \cite{kty:2018},
and so on. 
However, \Mnatural-convexity can be easily violated by introducing 
some additional constraints.

Let us introduce the most basic model where only distributional constraints are colleges' maximum quotas. 
\begin{definition}[maximum quotas]
\label{def:standard}
We say a family of vectors $F \subseteq \mathbf{Z}^{m}_+$ is given as colleges' maximum quotas, 
when  for each college $c_i \in C$, its maximum quota $q_{c_i}$ is given, and 
$\nu \in F$ iff $\forall i \in M$, $\nu_i \leq q_{c_i}$ holds.
We say ${f}$ is given as colleges' maximum quotas 
if its induced vectors are given as colleges' maximum quotas. 
\end{definition}
If $f$ is given as colleges' maximum quotas, then $f$ is a hereditary \Mnatural-convex set, but not vice versa. 

With a slight abuse of notation, for two sets of contracts
$Y$ and $Y'$,
we denote $Y_s \succ_s Y'_s$ if either (i)
$Y_s = \{x\}$, $Y'_s = \{x'\}$, and $x \succ_s x'$
for some $x, x' \in X_s$,
or
(ii) $Y_s = \{x\}$ for some $x \in X_s$ that is  acceptable for $s$
and $Y'_s = \emptyset$.
Furthermore, we denote $Y_s \succeq_s Y'_s$ if  either $Y_s \succ_s Y'_s$
or $Y_s = Y'_s$.
Also, we use notations like  $x \succ_s Y_s$ or 
$c \succ_s Y_s$, where $x$ is a contract, $Y$ is a matching, and $c$ is a college.

Let us introduce several desirable properties of a matching and a
mechanism.
We say a mechanism satisfies property A if the mechanism produces a matching that satisfies property A in every possible matching market. 

First, we define fairness. 
\begin{definition}[fairness]
\label{def:fairness}
In matching $Y$, student $s$ \emph{has justified envy} toward
another student $s'$ if 
$(s,c)\in X$ is acceptable for $s$,
$c \succ_s Y_s$, $(s', c) \in Y$, and
$s \succ_c s'$ hold.
We say matching $Y$ is \emph{fair} if no student has justified envy.
\end{definition}
Fairness implies that if student $s$ is not assigned to college $c$ (although she hopes to be assigned), 
then $c$ prefers all students assigned to it over $s$. 

Next, we define a series of properties
on students' welfare (efficiency). 
\begin{definition}[Pareto efficiency]
    Matching $Y$ is Pareto dominated by another 
    matching $Y'$ if $\forall s \in S$, $Y'_s \succeq_s Y_s$, and
    $\exists s \in S$,  $Y'_s \succ_s Y_s$ hold.
	Feasible matching $Y$ is Pareto efficient if
	no other feasible matching Pareto dominates it.
\end{definition}
\ifconference
\else
In short, feasible 
matching $Y$ is Pareto efficient
if there exists no other feasible matching 
$Y'$ such that all students weakly prefer
$Y'$ over $Y$, and at least one student 
strictly prefers $Y'$ over $Y$.
\fi

\begin{definition}[nonwastefulness]
\label{def:nonwastefulness}
In matching $Y$, student $s$ \emph{claims an empty seat} of college $c$
if 
$(s,c)$ is acceptable for $s$,
$c \succ_s Y_s$,
and $(Y \setminus Y_s) \cup \{(s, c)\}$ is feasible.
We say feasible matching $Y$ is \emph{nonwasteful} if no student claims an empty
seat. 
\end{definition}
Intuitively, nonwastefulness means that we cannot improve the matching of one student without affecting other students. 

When additional distributional constraints (besides colleges' maximum quotas) 
are imposed, 
fairness and nonwastefulness become incompatible in general.
One way to address the incompatibility is weakening the requirement of 
nonwastefulness. 
\citet{aziz:cutoff:2021} introduce a weaker efficiency concept called \emph{cut-off nonwastefulness}.
\begin{definition}[cut-off nonwastefulness]
    \label{def:cut-off-nonwastefulness}
    Feasible matching $Y$ is \emph{cut-off nonwasteful} if
    student $s$ claims an empty seat of college $c$, 
    then there exists another student $s'$ such that
    $c \succ_{s'} Y_{s'}$,
    $s' \succ_c s$, 
and $(Y \setminus Y_{s'}) \cup \{(s', c)\}$ is infeasible. 
	\end{definition}
Intuitively, 
we consider the claim of student $s$ to move her to college
$c$ from her current match is not considered legitimate if by doing
so, another student $s'$ would have justified envy toward $s$. 
\citet{aziz:cutoff:2021} show that a fair and cut-off nonwasteful matching always exists 
under hereditary constraints. This result carries over to less general hereditary and 
\Mnatural-convex set constraints, as well as weaker efficiency requirements described below. 
Note that the existence of a fair and cut-off nonwasteful matching does not guarantee 
the existence of a strategyproof mechanism for obtaining it, as shown in Section~\ref{sec:impossibility}.

\citet{kamakoji-concepts} propose another weaker version of the nonwastefulness concept, which we refer to as \emph{weak nonwastefulness}.
\begin{definition}[weak nonwastefulness]
\label{def:weak-nonwastefulness}
In matching $Y$, student $s$ \emph{strongly claims an empty seat} of $c$
if 
$(s,c)$ is acceptable for $s$,
$c \succ_s Y_s$, 
and $Y \cup \{(s, c)\}$ is feasible. 
We say feasible matching $Y$ is \emph{weakly nonwasteful} if no student strongly claims an empty seat. 
\end{definition}
\ifconference
\else
Student $s$ can strongly claim an empty seat of $c$ only when $Y\cup\{(s,c)\}$, i.e., the matching obtained by adding her to college $c$ (without removing her from her current college), is feasible. 
\fi

Let us define two more weaker efficiency properties. 
\begin{definition}[no vacant college]
\label{def:no-vacant-college}
    We say feasible matching $Y$ satisfies
\emph{no vacant college} property if student $s$ claims an empty seat
of college $c$, then $Y_s\neq \emptyset$ or $Y_c\neq \emptyset$ holds. 
\end{definition}
\ifconference
\else
Intuitively, no vacant college property means that 
the claim of student $s$ to move her to college
$c$ from her current match is considered legitimate only when
$s$ is not matched to any college and no student is assigned to $c$.
\fi

\begin{definition}[no empty matching]
    \label{def:no-empty-matching}
    In matching $Y$, 
student $s$ very strongly claims 
an empty seat of college $c$, 
when $Y=\emptyset$, 
$(s,c) \in X$, $c \succ_s \emptyset$, 
and $\{(s,c)\}$ is feasible. 
Feasible matching $Y$ satisfies no empty matching property if no student very strongly claims an empty seat of any college. 
\end{definition}
Note that this series of efficiency 
properties becomes monotonically weaker in 
this order as long as distributional constraints are hereditary. 
More specifically,
Pareto efficiency implies nonwastefulness, but not vice versa, 
nonwastefulness implies cut-off nonwastefulness, but not vice versa, and so on. 
\ifconference
\else
Pareto efficiency means that we cannot 
improve the matching of a set of students 
without hurting other students, while 
nonwastefulness means that we cannot improve the matching of \emph{one student} without 
affecting other students. Thus, Pareto efficiency implies nonwastefulness. 
If $Y$ is nonwasteful, no student can claim an empty seat. 
If $Y$ is cut-off nonwasteful, a student can claim an empty seat in some cases.
Thus, cut-off nonwastefulness is 
weaker than nonwastefulness. 
Next, we show that cut-off nonwastefulness implies weak nonwastefulness by showing its contraposition. More specifically, 
we assume 
student $s$ strongly claims an empty seat of college $c$ in $Y$. 
Then, we show that 
$Y$ cannot be cut-off nonwasteful. 
The fact that $s$ strongly claims an 
empty seat of $c$ implies that 
$s$ also claims an empty seat of 
$c$ since if 
$Y\cup \{(s,c)\}$ is feasible, 
$(Y\setminus Y_s) \cup \{(s,c)\}$ is 
also feasible. 
Assume there exists another student 
$s'$, where $c \succ_{s'} Y_{s'}$ and 
$s' \succ_c s$ hold.
Then, since 
$Y\cup \{(s,c)\}$ is feasible, 
$(Y\setminus Y_{s'}) \cup \{(s',c)\}$ is 
also feasible. Thus, $Y$ cannot be 
cut-off nonwasteful. 

Next, we show that weak nonwastefulness 
implies no vacant college property by showing its contraposition. More specifically, 
no vacant college property means that 
the claim of student $s$ to move her 
from the current matching to $c$ is 
considered legitimate only when 
$s$ is not matched to any college and 
no student is assigned to $c$. 
Let us assume $Y$ 
does not satisfy 
no vacant college property, i.e., 
there exists student $s$
who claims an empty seat of $c$ 
when 
$Y_s=\emptyset$ and $Y_c=\emptyset$. 
Then, 
we show $s$
strongly claims an empty seat of $c$ in $Y$. 
Since $Y_s = \emptyset$, 
the fact that $(Y \setminus Y_s) \cup \{(s,c)\}$ 
is feasible implies that 
$Y \cup \{(s,c)\}$ is feasible. Thus, 
$s$ also strongly claims an empty seat of $c$.
Finally, 
we show that no vacant college property 
implies no empty matching property by showing its contraposition. More specifically, 
we assume student $s$ very strongly claims 
an empty seat of $c$ in matching $Y$. 
Then, we show that $Y$ does not satisfy 
no vacant college property. 
The fact that 
student $s$ very strongly claims an empty seat of $c$ implies 
$c\succ_s \emptyset$, 
$Y_s=\emptyset$, and 
$Y_c=\emptyset$ hold. Thus,
$Y$ does not satisfy no vacant college 
property. 
\fi

Next, we introduce strategyproofness. 
\begin{definition}[strategyproofness]
\label{def:strategyproofness}
We say a mechanism is \emph{strategyproof}
if no student ever has any incentive
to misreport her preference no matter what the other students report. 
More specifically, 
let $Y$ denote the matching obtained when $s$ declares her true preference $\succ_s$, 
and $Y'$ denote the matching obtained when $s$ declare something else, 
then $Y_s \succeq_s Y'_s$ holds. 
\end{definition}

Here, we consider strategic manipulations only by students. 
It is well-known that even in the most basic model of one-to-one
matching \cite{Gale:AMM:1962}, satisfying strategyproofness (as well as basic 
fairness and efficiency requirements) for both sides is impossible~\cite{roth1982economics}. 
One rationale for ignoring the college side would be that 
the preference of a college must be
presented in an objective way and cannot be skewed arbitrarily.

\section{Existing mechanism}
In this section, we briefly introduce existing mechanisms, which are strategyproof for a given class of constraints. 
First, let us introduce Generalized Deferred Acceptance mechanism (GDA), which works  
under hereditary \Mnatural-convex set constraints \cite{Hatfield:AER:2005}. 
As its name shows, it is
a generalized version 
of the Deferred Acceptance mechanism \cite{Gale:AMM:1962}.
To define GDA, we first introduce \emph{choice functions} of 
students and colleges.

\begin{definition}[students' choice function]
For each student $s$, her \emph{choice function} $Ch_s$ 
specifies her most preferred contract within each $Y \subseteq X$,
 i.e.,
$Ch_s(Y)= \{x\}$, where $x$ is the most preferred acceptable contract
in $Y_s$ 
if one exists, and $Ch_s(Y)=\emptyset$  if no such contract exists. 
Then, the choice function of all students is defined as
$Ch_S(Y) := \bigcup_{s \in S} Ch_s(Y_s)$.
\end{definition}

\begin{definition}[colleges' choice function]
\label{def:college:choice}
We assume each contract $(s,c) \in X$ is associated with 
its unique strictly positive weight $w((s,c))$. 
We assume these weights respect each college's preference
$\succ_c$, i.e., if $(s, c) \succ_c (s', c)$, 
then $w((s,c)) > w((s', c))$ holds. 
For $Y\subseteq X$, let $w(Y)$ denote $\sum_{x \in Y} w(x)$.
Then, the choice function of all colleges is defined as
$Ch_C(Y) := \arg\max_{Y'\subseteq Y} 
{f}(\nu(Y')) + w(Y')$.
\end{definition}

As long as ${f}$ induces a hereditary 
\Mnatural-convex set, a unique subset $Y'$ exists
that maximizes the above formula. 
Furthermore, such a subset can be efficiently 
computed in the following greedy way. 
Let $Y'$ denote the set of chosen contracts, which is initially $\emptyset$. 
Then, sort $Y$ in the decreasing order of their weights. 
Then, choose contract $x$ from $Y$ one by one and add it to 
$Y'$, as long as $Y'\cup\{x\}$ is feasible. 

Using $Ch_S$ and $Ch_C$, GDA is defined as Mechanism~\ref{alg:gda}.
\begin{algorithm}[t]
\begin{algorithmic}[1]
\REQUIRE $X, Ch_S, Ch_C$
\ENSURE matching $Y$
\caption{Generalized Deferred Acceptance (GDA)}
\label{alg:gda}
\STATE $Re\leftarrow \emptyset$. 
\STATE Each student $s$ offers her most preferred contract $(s,c) $ which has not been rejected before
(i.e., $(s,c) \not\in Re$). If no remaining contract is acceptable  for $s$, $s$ does not make any offer.
Let $Y$ be the set of contracts offered (i.e., $Y=Ch_S(X\setminus Re)$). 
\STATE Colleges tentatively accept $Z=Ch_C(Y)$ and reject other contracts in $Y$ (i.e., $Y\setminus Z$).
\STATE If all the contracts in $Y$ are tentatively accepted at 3, then let $Y$ be the final matching 
and terminate the mechanism. Otherwise, $Re \leftarrow Re\cup (Y\setminus Z)$, and go to $2$.
\end{algorithmic}
\end{algorithm}
Note that we describe the mechanism using terms like "student $s$ offers" to make the description 
more intuitive. In reality, GDA is a direct-revelation mechanism, where the mechanism first collects 
the preference of each student, and the mechanism chooses a contract on behalf of each student. 

\citet{kty:2018} show that when $f$ induces a 
hereditary \Mnatural-convex set, GDA is strategyproof,
the obtained matching $Y$ satisfies a property called Hatfield-Milgrom
stability (HM-stability), and $Y$ is the student-optimal matching within all HM-stable matchings
(i.e., all students weakly prefer $Y$ over any other HM-stable matching). 
\begin{definition}[HM-stability]
Matching $Y$ is HM-stable if $Y=Ch_S(Y)=Ch_C(Y)$, and 
there exists no contract $x \in X\setminus Y$, such that 
$x \in Ch_S(Y\cup \{x\})$ and 
$x \in Ch_C(Y\cup \{x\})$ hold.
\end{definition}
Intuitively, HM-stability means there exists no contract in $X\setminus Y$ that is mutually preferred by students and colleges.
Note that HM-stability implies fairness. 
If student $s$ has justified envy in matching $Y$, there exists
$(s,c) \in X\setminus Y$, $(s', c) \in Y$, s.t. 
$(s,c) \succ_s Y_s$ and $w((s,c))> w((s',c))$ holds. 
Then, $(s,c) \in Ch_S(Y\cup \{(s,c)\})$ and 
$(s,c) \in Ch_C(Y\cup \{(s,c)\})$ hold, i.e., 
$Y$ is not HM-stable. 

\ifconference
\else
For standard maximum quotas constraints, 
the only distributional constraints are $(q_c)_{c \in C}$, i.e.,  
${f}(Y) = 0$ iff for each $c\in C$, $|Y_c|\leq q_c$ holds. 
Then, $Ch_C(Y)$ is defined as 
$\bigcup_{c\in C} Ch_c(Y_c)$, where
$Ch_c$ is the choice function of each college $c$, 
which chooses top $q_c$ contracts from $Y_c$ 
based on $\succ_c$. When $Ch_C$ is defined this way, 
GDA becomes equivalent to the standard DA. 
\fi

Next, we introduce two mechanisms that work for hereditary constraints.
The Serial Dictatorship (SD) mechanism \cite{goto:17}
is parameterized by an exogenous serial order over the students
called a master-list. 
We denote the fact that $s$ is placed in a higher/earlier position than 
student $s'$ in master-list $L$ as $s \succ_{L} s'$. 
Students are assigned sequentially according to the master-list. 
In our context with constraints, student $s$ is assigned to 
her most preferred college $c$, 
where $c$ considers her acceptable (i.e., $(s,c) \in X$ holds) and 
assigning $s$ to $c$ does not cause any constraint violation. 
More specifically, assume the obtained matching for students placed higher than 
$s$ in $L$ is $Y$. Then, $s$ can be assigned to $c$ when 
$f(\nu(Y\cup\{(s,c)\}))=0$ holds. 
SD is strategyproof and achieves Pareto efficiency.

The Artificial Cap Deferred Acceptance mechanism (ACDA) is defined 
as follows. First, 
we choose one vector $\nu^*$ s.t. $f(\nu^*)=0$, and 
there exists no $\nu' > \nu^*$ where $f(\nu')=0$, 
i.e., a maximal feasible vector. Note that $\nu^*$ must be chosen independently from 
students' preferences $\succ_S$ to guarantee strategyproofness. 
Then, we apply standard DA, where 
maximum quota $q_{c_i}$ for each college $c_i$ is given as $\nu^*_i$. 
Intuitively, in ACDA, the set of feasible vectors $F$ is artificially reduced to 
a hyper-rectangle, where $\nu$ is feasible iff $\nu\leq \nu^*$. 
ACDA is strategyproof and fair, assuming $\nu^*$ is chosen independently from 
students' preferences. 

\section{Existence of fair and strategyproof mechanism}
\label{sec:impossibility}
\begin{table}
\caption{Existence of fair and strategyproof mechanism
(\ding{51} means such a mechanism exists, 
 \ding{55} means such a mechanism does not exist, and 
 \ding{54} means even without strategyproofness, 
 a matching that satisfies fairness and the efficiency property 
 may not exist. A red mark represents a new result obtained in this paper) }
 \label{tbl:properties}
    \begin{center}
          \begin{tabular}{clll}
    & maximum 
    & hereditary \&
    & hereditary \\
    & quotas  &  \Mnatural-convex & \\
    & & set & \\
   \hline 
   Pareto efficiency & \ding{54} \cite{roth1982economics}
              & \ding{54}  & \ding{54} \\
    nonwastefulness & \ding{51} \cite{roth1985} &  \ding{54} \cite{kamakoji-concepts} &  \ding{54} \\
    cut-off nonwastefulness & \ding{51} &  \red{\ding{55}} [Thm~\ref{thm:no-existence-cutoff}] & \ding{55} \\    
    weak nonwastefulness  & \ding{51} &  \ding{51} \cite{Kimura23} & \ding{55} \cite{cho:2022} \\
    no vacant college  & \ding{51} &  \ding{51} & \red{\ding{55}} [Thm~\ref{thm:no-existence-no-vacant}]
    \\
    no empty matching  & \ding{51} &  \ding{51} & \red{\ding{51}} [Thm~\ref{thm:existence-no-empty-matching}]\\
    \hline
  \end{tabular}
    \end{center}
\end{table}
In this section, we examine whether a fair and strategyproof mechanism exists under a given class of distributional constraints in conjunction with some efficiency property. 
The classes of constraints we consider are: 
maximum quotas constraints, hereditary and \Mnatural-convex set constraints, and hereditary constraints. 

First, we list known results. 
\begin{itemize}
\item   For maximum quotas constraints, fairness, nonwastefulness, and strategyproofness are compatible, i.e., the standard DA satisfies these properties
\cite{roth1985}.
On the other hand, 
fairness and Pareto efficiency are incompatible, i.e., even without strategyproofness, 
a matching that satisfies Pareto efficiency and fairness 
may not exist \cite{roth1982economics}.

\item For hereditary and \Mnatural-convex set constraints, 
fairness, weak nonwastefulness, and strategyproofness are compatible, i.e., 
Generalized DA satisfies these properties \cite{Kimura23}.
On the other hand, fairness and  nonwastefulness are incompatible \cite{kamakoji-concepts}.
\item For hereditary constraints, fairness, weak nonwastefulness, and strategyproofness are incompatible \cite{cho:2022}
\end{itemize}

Given these known results, the remaining open questions are as follows. 
\begin{enumerate}
    \item Under hereditary and \Mnatural-convex set constraints,  
    does a strategyproof, fair, and cut-off nonwasteful mechanism exist? 
    \item Under hereditary constraints, can a strategyproof 
    and fair mechanism satisfy any property weaker than 
    weak nonwastefulness? 
\end{enumerate}
For question (1), we obtain a negative answer, as shown 
in Theorem~\ref{thm:no-existence-cutoff}.
For question (2), we obtain a stronger result than
\citet{cho:2022}, i.e., 
Theorem~\ref{thm:no-existence-no-vacant} shows 
that no mechanism simultaneously satisfies 
strategyproofness, fairness, and no vacant college 
property. Then, we show a simple mechanism that 
satisfies strategyproofness, fairness, and no empty matching 
property (Theorem~\ref{thm:existence-no-empty-matching}). In summary, we obtain tight boundaries 
(at least in the granularity of efficiency properties we consider in this paper) on whether a strategyproof and fair mechanism can satisfy certain efficiency properties for each class of constraints
(Table~\ref{tbl:properties}).

\begin{theorem}
\label{thm:no-existence-cutoff}
No mechanism can simultaneously satisfy 
fairness, strategyproofness, and cut-off nonwastefulness under hereditary \Mnatural-convex
set constraints.
\end{theorem}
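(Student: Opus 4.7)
The plan is to establish the impossibility by a single counterexample in the standard two-profile deviation style: I will exhibit a small matching market with a hereditary $\Mnatural$-convex constraint, fix a preference profile $P$, and identify a one-student report change leading to a profile $P'$ such that any fair and cut-off nonwasteful choice on the two profiles must give that student a profitable misreport.

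First, I would pick a tight $\Mnatural$-convex hereditary constraint such as $\nu_1+\nu_2 \le 1$ (a regional cap over two colleges with individual caps of one), together with a small handful of students. Tight constraints keep the set of feasible matchings small, so I can enumerate fair and cut-off nonwasteful matchings by hand under any given preference profile. The key definitional feature I plan to exploit is that cut-off nonwastefulness tolerates a claim $(s,c)$ only when some strictly $c$-preferred student $s'$ also wants $c$ and assigning $s'$ to $c$ would be infeasible; this provides the leverage point for the manipulation, since a student's reported preference determines whether she can serve as such a ``blocker'' for others.

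Next, I would consider a variant $P'$ obtained by truncating or reordering the preference list of one student $s^{\star}$, chosen so that in $P'$ either $s^{\star}$ disappears as a blocker for another student's claim, or a new blocker appears (respectively disappears) for $s^{\star}$'s own claim. After enumerating the fair and cut-off nonwasteful matchings under $P'$, I would verify that the admissible set shifts in a way that any strategyproof mechanism satisfying these properties on both $P$ and $P'$ must give $s^{\star}$ an outcome under $P'$ that is strictly better than her outcome under $P$, evaluated by her true (i.e., $P$-) preference. Strategyproofness requires truth-telling to weakly dominate any misreport, so this yields the required contradiction.

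The main obstacle is pinpointing the concrete instance with the right combinatorial structure. Under hereditary $\Mnatural$-convex constraints, Generalized Deferred Acceptance already provides a fair, weakly nonwasteful, strategyproof mechanism, so the counterexample must exploit precisely the narrow gap between weak and cut-off nonwastefulness: a profile whose natural fair strategyproof output is weakly nonwasteful but fails cut-off nonwastefulness, together with a one-step manipulation that traps any ``corrected'' mechanism into incentive-incompatible behavior. I expect to locate such a pair by systematically searching over profiles with tight regional-cap constraints, after which the proof reduces to a finite case analysis of the enumerated fair and cut-off nonwasteful matchings on the two profiles.
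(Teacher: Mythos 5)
Your plan is exactly the route the paper takes: the same constraint $\nu_1+\nu_2\le 1$ over two colleges (a regional cap, which is hereditary and \Mnatural-convex), hand enumeration of the fair and cut-off nonwasteful matchings on a few profiles, and a profitable deviation obtained by truncating/extending one student's list. You have also correctly identified the leverage point, namely that whether a student can serve as the ``blocker'' in the cut-off condition depends on her reported list.

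However, the proposal stops where the actual proof begins: you never exhibit the instance, and you explicitly defer the ``main obstacle'' of finding it to a future search. Two concrete ingredients are missing. First, the colleges' preferences must be \emph{crossed} relative to the students': with $S=\{s_1,s_2\}$, take $c_1: s_2\succ s_1$ and $c_2: s_1\succ s_2$, while $s_i$ most prefers $c_i$. This is what makes fairness and cut-off nonwastefulness pull in opposite directions under the regional cap. Second, a single two-profile comparison does not suffice, because at the natural ``truthful'' profile $\succ^3=(c_1,\,c_2)$ (each student lists only her favorite) there are \emph{two} admissible matchings, $[c_1,\emptyset]$ and $[\emptyset,c_2]$, so no contradiction arises from one deviation alone. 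You need a three-profile fork: at $\succ^1=(c_1c_2,\,c_2)$ fairness kills $[\emptyset,c_2]$ (else $s_1$ envies $s_2$ at $c_2$) and cut-off nonwastefulness kills everything but $[c_1,\emptyset]$, since $s_2$ does not list $c_1$ and hence cannot block $s_1$'s claim; symmetrically $\succ^2=(c_1,\,c_2c_1)$ forces $[\emptyset,c_2]$. Then whichever of the two admissible matchings the mechanism picks at $\succ^3$, the unmatched student profitably deviates to $\succ^1$ or $\succ^2$ by appending the other college to her list. Without identifying this crossed-preference structure and the fork argument, the proposal is a correct strategy but not yet a proof.
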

\ifconference
\else
\begin{table}[!t]
        \caption{Possible matchings for preference profiles\\ 
        (Theorem~\ref{thm:no-existence-cutoff})}        \label{table:cutoff-no-fair}
        \centering
        \begin{tabular}{llll}
            \hline preference & 
            $s_1$  & $s_2$ & possible  \\
            profile & & & matchings \\
            \hline \hline
            $\succ^1_S$  & $c_1$$c_2$  & $c_2$ & 
            [$c_1, \emptyset$]\\ \hline
            $\succ^2_S$  & $c_1$ & $c_2 c_1$ & [$\emptyset, c_2$] \\ \hline
            $\succ^3_S$  & $c_1$ & $c_2$ & $[c_1, \emptyset$], [$\emptyset, c_2$]\\ \hline
        \end{tabular}
\end{table}

    \begin{proof}
    Consider a matching market with two 
    students $S = \{s_1, s_2\}$ and two 
    colleges $C = \{c_1, c_2\}$.
    The colleges' preference profile 
    $\succ_C$ are as follows:
    \[
    \begin{array}{ccc}
        c_1: s_2 \succ_{c_1} s_1\\
        c_2: s_1 \succ_{c_2} s_2\\
    \end{array}
    \]

     To make the description concise, 
    we denote a preference of students by 
    a sequence of acceptable colleges. 
    For example, we denote 
    $c_1 \succ_{s} c_2 \succ_s \emptyset$ 
    as $c_1 c_2$, and 
    $c_1 \succ_{s} \emptyset \succ_{s} c_2$ 
    as $c_1$. 
    Furthermore, we denote a matching
    as a pair of colleges assigned to $s_1$ and $s_2$.
    For example, we denote matching
    $\{(s_1, c_2), (s_2, c_1)\}$ as 
    $[c_2, c_1]$. 
   
    Suppose $f(\nu) = 0$ if and only if $\nu \leq \nu'$ for some $\nu' \in \{(1, 0), (0, 1)\}$.
    This setting reflects the situation where the regional quotas constraints $|Y_{c_1}| + 
    |Y_{c_2}| \leq 1$ 
    are imposed, which form \Mnatural-convex set constraints.

    Assume, for the sake of contradiction, 
    that there exists a fair, strategyproof, and 
    cut-off nonwasteful mechanism. 
    We examine three students' preference profiles: 
    $\succ^1_S, \succ^2_S$, and $\succ^3_S$. 
    These preference profiles and possible matchings that satisfy fairness and cut-off nonwastefulness are summarized in 
    Table~\ref{table:cutoff-no-fair}. 
    First, for 
    $\succ_S^1 = (c_1 c_2, c_2)$,  due to fairness, we cannot allocate $s_2$ to $c_2$.
    Also, due to cut-off nonwastefulness, 
    we cannot allocate $s_1$ to $c_2$. 
    Then, the mechanism must choose 
    $[c_1, \emptyset]$. 

    Next, for 
    $\succ_S^2 = (c_1, c_2 c_1)$, due to fairness, we cannot allocate $s_1$ to $c_1$. 
    Also, due to cut-off nonwastefulness, 
    we cannot allocate $s_2$ to $c_1$. 
    Then, the mechanism must choose 
    $[\emptyset, c_2]$. 

    Finally, for $\succ_S^3 = (c_1, c_2)$, 
    due to cut-off nonwastefulness and distributional 
    constraints, exactly one student must be assigned 
    to her acceptable college. Thus, there exist two possible matchings:
    (a) $[c_1, \emptyset]$ or (b) $[\emptyset, c_2]$. 
   If (a) is chosen, then $s_2$ has an incentive to
   manipulate (to modify the profile to $\succ_S^2$)
   so that she is assigned to $c_2$. 
   If (b) is chosen, then $s_1$ has an incentive 
   to manipulate (to modify the profile to $\succ_S^1$)
   so that she is assigned to $c_1$. 
   This fact violates our assumption that 
   the mechanism is strategyproof. 
    \end{proof}

\fi
 
\begin{theorem}\label{thm:no-existence-no-vacant}
No mechanism can simultaneously satisfy 
fairness, strategyproofness, and no vacant college 
property under hereditary constraints. 
\end{theorem}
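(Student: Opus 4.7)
The plan is to prove the impossibility by exhibiting a small market under hereditary (but non-\Mnatural-convex) distributional constraints and deriving a contradiction via a family of preference profiles related by single-student deviations, in the same vein as the proof of Theorem~\ref{thm:no-existence-cutoff}. After fixing the instance, I would enumerate, for each relevant profile, all matchings that simultaneously satisfy fairness and the no vacant college property, and then use strategyproofness to propagate the mechanism's choice across the profiles until a beneficial misreport appears.

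For the instance, I would take $S=\{s_1,s_2\}$, $C=\{c_1,c_2,c_3\}$, with $\succ_{c_1}\colon s_1\succ s_2$, $\succ_{c_2}\colon s_2\succ s_1$, $\succ_{c_3}\colon s_1\succ s_2$, and $f$ whose feasible vectors are $F=\{(0,0,0),(1,0,0),(0,1,0),(0,0,1),(1,1,0)\}$. This $F$ is downward closed, so $f$ is hereditary, but \Mnatural-convexity fails at $\nu=(1,1,0)$, $\nu'=(0,0,1)$, $i=1$: neither $j=0$ (since $\nu'+e_1=(1,0,1)\notin F$) nor $j=3$ (since $\nu-e_1+e_3=(0,1,1)\notin F$) witnesses the exchange, so $f$ is not \Mnatural-convex and the positive result for Generalized DA does not apply. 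Intuitively, $c_3$ cannot coexist with $c_1$ or $c_2$, while $c_1$ and $c_2$ may each seat one student simultaneously.

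I would then focus on three profiles obtained from $\succ^\alpha_S=(c_3c_1,\,c_3c_2)$ by single-student deviations: $\succ^\beta_S=(c_3,\,c_3c_2)$ (shrinking $s_1$'s list), $\succ^\gamma_S=(c_3c_1,\,c_2)$ (shrinking $s_2$'s list), and the ``plain'' profile $\succ^\delta_S=(c_1,\,c_2)$. A routine enumeration shows that $\succ^\alpha$ and $\succ^\gamma$ each admit exactly the two matchings $\{(s_1,c_3)\}$ and $\{(s_1,c_1),(s_2,c_2)\}$ under fairness$+$NVC; $\succ^\beta$ admits exactly $\{(s_1,c_3)\}$ and $\{(s_2,c_2)\}$; and $\succ^\delta$ uniquely admits $\{(s_1,c_1),(s_2,c_2)\}$, since the empty matching and every singleton matching are ruled out by NVC. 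A case split on $\phi(\succ^\alpha_S)$ then closes the argument: if $\phi(\succ^\alpha_S)=\{(s_1,c_3)\}$, strategyproofness for $s_2$ across $\succ^\alpha,\succ^\gamma$ and further auxiliary profiles connected to $\succ^\delta$ gives $s_2$ a misreport that lifts her from $\emptyset$ to $c_2$; if $\phi(\succ^\alpha_S)=\{(s_1,c_1),(s_2,c_2)\}$, strategyproofness for $s_1$ across $\succ^\alpha,\succ^\beta$ forces $\phi(\succ^\beta_S)=\{(s_2,c_2)\}$, after which a further one-student deviation by $s_2$ again yields a beneficial misreport.

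The main obstacle is that no vacant college is the weakest efficiency notion considered, so in general many matchings survive fairness$+$NVC and ad hoc preferences do not narrow the options enough to force a contradiction. The construction of $f$ is correspondingly delicate: $(1,1,0)\in F$ is required so that $\{(s_1,c_1),(s_2,c_2)\}$ is an allowed matched-pair outcome, while $(1,0,1),(0,1,1)\notin F$ is required so that singleton assignments to $c_3$ do not themselves trigger NVC violations in the adjacent profiles; together these two features pin each relevant profile down to at most two allowed matchings, which is the structural prerequisite for the strategyproofness argument to bite.
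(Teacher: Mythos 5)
Your overall strategy --- a small hereditary but non-\Mnatural-convex instance plus a chain of single-student deviations, in the style of Theorem~\ref{thm:no-existence-cutoff} --- is exactly the approach the paper takes. However, your specific instance does not close, and the gap sits precisely in the phrase ``further auxiliary profiles connected to $\succ^\delta$.'' In Case~1 ($\phi(\succ^\alpha_S)=\{(s_1,c_3)\}$), strategyproofness indeed forces $\phi(\succ^\gamma_S)=\{(s_1,c_3)\}$, but $\succ^\gamma_S$ and $\succ^\delta_S$ differ in \emph{$s_1$'s} report, so $s_2$ cannot unilaterally reach $\succ^\delta_S$. Worse, whenever $s_1$ reports $c_3$ as acceptable, the matching $\{(s_1,c_3)\}$ is fair and satisfies no vacant college for \emph{every} report of $s_2$: fairness at $c_3$ holds since $s_1\succ_{c_3}s_2$, and $s_2$'s claims on $c_1$ and $c_2$ are blocked because $(1,0,1),(0,1,1)\notin F$. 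So a mechanism that answers $\{(s_1,c_3)\}$ throughout that slice gives $s_1$ her reported top choice and gives $s_2$ no profitable deviation; Case~1 produces no contradiction. Case~2 has the symmetric problem: after $\phi(\succ^\beta_S)=\{(s_2,c_2)\}$ is forced, $s_2$ already holds $c_2$ and can never obtain $c_3$ at any profile where $s_1$ reports $c_3$ acceptable (fairness at $c_3$ blocks her), so no ``further one-student deviation by $s_2$'' yields a gain. Checking the natural four-profile cycle $\succ^\alpha\to\succ^\gamma\to(c_3,c_2)\to\succ^\beta\to\succ^\alpha$ confirms that both the ``always $\{(s_1,c_3)\}$'' and the ``always $\{(s_1,c_1),(s_2,c_2)\}$'' assignments are internally consistent with strategyproofness.

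The structural reason your instance fails is that your second feasibility region $\{(0,0,1)\}$ can seat only one student, so there is no preference profile at which the no vacant college property forces \emph{both} students to be matched on that side; an outcome leaving one student unmatched always survives, and unmatched outcomes are exactly what this very weak efficiency notion tolerates. The paper avoids this by using four colleges and $F$ equal to the downward closure of $\{(1,1,0,0),(0,0,1,1)\}$, with $c_1,c_2$ preferring $s_1$ and $c_3,c_4$ preferring $s_2$: the two endpoint profiles $(c_2,c_1)$ and $(c_3,c_4)$ each have a \emph{uniquely} forced full matching ($[c_2,c_1]$ and $[c_3,c_4]$ respectively, one in each region), and a chain of seven profiles connects them so that somewhere along the chain a student must jump regions and thereby acquires a profitable misreport. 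To repair your proof you would need to add a fourth college so that each region can accommodate both students, at which point you are essentially reconstructing the paper's argument.
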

\ifconference
\else
\begin{table}[!t]
        \caption{Possible matchings for preference profiles\\ (Theorem~\ref{thm:no-existence-no-vacant})}        \label{table:fair_NVH4}
        \centering
        \begin{tabular}{llll}
            \hline preference & 
            $s_1$  & $s_2$ & possible  \\
            profile & & & matchings \\
            \hline \hline
            $\succ^1_S$  & $c_2$  & $c_1$ & 
            \blue{[$c_2, c_1$]}\\ \hline
            $\succ^2_S$  & $c_2$ & $c_1 c_3$ & \blue{[$c_2, c_1$]}, [$\emptyset, c_3$] \\ \hline
            $\succ^3_S$  & $c_1 c_2 c_3$ & $c_1 c_3$ & \blue{[$c_1, \emptyset$]}, [$\emptyset, c_3$]\\ \hline
            $\succ^4_S$  & $c_1 c_2 c_3$  & $c_1 c_3 c_4$ & \blue{[$c_1, \emptyset$]}, [$\emptyset, c_3$]\\ \hline
            $\succ^5_S$  & $c_3 c_1$  & $c_1 c_3 c_4$ & \blue{[$c_1, \emptyset$]}, [$\emptyset, c_3$]\\ \hline
            $\succ^6_S$  & $c_3 c_1$  & $c_4$ & \blue{[$c_1, \emptyset$]}, [$c_3, c_4$]\\ \hline
            $\succ^7_S$  & $c_3$  & $c_4$ & [$c_3, c_4$]\\ \hline
        \end{tabular}
\end{table}

 \begin{proof}
     Consider a matching market with two students $S = \{s_1, s_2\}$ and four colleges $C = \{c_1, c_2, c_3, c_4\}$.
		The colleges' preferences $\succ_C$ are as follows:
		\[
		\begin{array}{ccc}
			c_1: s_1 \succ_{c_1} s_2\\
			c_2: s_1 \succ_{c_2} s_2\\
			c_3: s_2 \succ_{c_3} s_1\\
			c_4: s_2 \succ_{c_4} s_1\\
		\end{array}
		\]
    Suppose $f(\nu) = 0$ if and only if $\nu \leq \nu'$ for some $\nu' \in \{(1, 1, 0, 0), \linebreak 
    (0, 0, 1, 1)\}$.

    Assume, for the sake of contradiction, that 
    there exists a mechanism that is fair, 
    strategyproof, and satisfies no vacant college 
    property. 

Here, we examine seven possible students' profiles
$\succ^1_S, \ldots, \succ^7_S$ described in 
Table~\ref{table:fair_NVH4}.
For each students' profile, we also enumerate 
all matchings that are fair and satisfy 
no vacant college property. 
For $\succ^1_S$, due to no vacant college property, 
both students must be assigned to their first 
choice colleges. Thus, the only possible matching 
is $[c_2, c_1]$. 
For $\succ^2_S$, another matching, 
[$\emptyset, c_3$] is also possible. 
However, if the mechanism chooses 
[$\emptyset, c_3$], student $s_2$ has an incentive 
to manipulate (to modify the profile to $\succ^1_S$)
so that she is assigned to $c_1$. 
Thus, the mechanism must choose $[c_2, c_1]$. 
For $\succ^3_S$, both students consider 
$c_1$ and $c_3$ acceptable. Due to fairness, 
only $s_1$ can be assigned to $c_1$, and 
only $s_2$ can be assigned to $c_3$. 
Also, if we assign $s_1$ to $c_2$, due to 
no vacant college property, we need to assign 
$s_2$ to $c_1$, However, this violates fairness. 
Thus, possible matchings are either 
[$c_1, \emptyset$] or [$\emptyset, c_3$]. 
However, if the mechanism chooses 
[$\emptyset, c_3$], 
student $s_1$ has an incentive to manipulate 
 (to modify the profile to $\succ^2_S$)
 so that she is assigned to $c_2$. 
 Continuing a similar argument, we obtain that the mechanism must choose the matching colored in 
 blue in Table~\ref{table:fair_NVH4}.
 In particular,  for $\succ^6_S$, the mechanism must choose 
 [$c_1, \emptyset$]. 
For $\succ^7_S$, the only matching that satisfies no vacant college 
property is [$c_3, c_4$].
This implies that when the profile is $\succ^6_S$, 
student $s_1$ has an incentive 
to manipulate (to modify the profile to $\succ^7_S$)
so that she is assigned to $c_3$. 
This violates our assumption that the mechanism is 
strategyproof.
 \end{proof}
\fi
 
Next, we show that there exists a mechanism 
that satisfies fairness, strategyproofness, and 
no empty matching property under hereditary constraints.
This mechanism utilizes GDA. 
More specifically, 
for given $f$, which is hereditary, we construct a set of vectors $F'$
such that 
$\forall \nu \in F'$, $f(\nu)=0$ holds (i.e., 
$F'$ is a subset of vectors induced by $f$), 
and $F'$ is a hereditary \Mnatural-convex set. 
Then, we apply GDA by using $f'$ (where 
$f'(\nu)=0$ iff $\nu\in F'$) instead of $f$. 
$F'$ is constructed as follows. 
We initialize $F'\leftarrow \{\cvec{0}\}$. 
Then, for each $i \in M$, if
$f(\cvec{i})=0$, we add $\cvec{i}$ to $F'$. 
Clearly, $F'$ is an \Mnatural-convex set;
it contains only $\cvec{0}$ and $\cvec{i}$ ($i\in M$).

\begin{theorem}
\label{thm:existence-no-empty-matching} 
Under hereditary constraints, GDA using 
$f'$ is fair, strategyproof, and satisfies
no empty matching property. 
\end{theorem}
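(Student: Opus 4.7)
The plan is to leverage the fact that the GDA mechanism is already known to be fair and strategyproof under hereditary $\text{M}^{\natural}$-convex set constraints, so it suffices to show that the artificial constraint set $F'$ we impose on top of $f$ is indeed hereditary and $\text{M}^{\natural}$-convex, and then argue that GDA run under $f'$ still produces a matching that is feasible under the original $f$ and satisfies the no empty matching property. Throughout, let $i(c)$ denote the index of college $c$ in $M$.

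First I would verify that $F' = \{\cvec{0}\} \cup \{\cvec{i} : i \in M, f(\cvec{i})=0\}$ is a hereditary $\text{M}^{\natural}$-convex set. Heredity is immediate: the only strict subdominators of a vector $\nu \in F'$ are either $\nu$ itself or $\cvec{0}$, which lies in $F'$ by construction. For $\text{M}^{\natural}$-convexity, I would case-split on the two vectors $\nu, \nu' \in F'$: if $\nu = \cvec{i}$ and $\nu' = \cvec{0}$, the unique coordinate $i$ with $\nu_i > \nu'_i$ admits the choice $j=0$, giving $\nu - \cvec{i} + \cvec{0} = \cvec{0}$ and $\nu' + \cvec{i} - \cvec{0} = \cvec{i}$, both in $F'$; if $\nu = \cvec{i}$ and $\nu' = \cvec{j}$ with $i\neq j$, the choice $j$ in the exchange step (available since $\nu_j < \nu'_j$) yields $\nu - \cvec{i} + \cvec{j} = \cvec{j}$ and $\nu' + \cvec{i} - \cvec{j} = \cvec{i}$, both in $F'$. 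The remaining cases have no coordinate $i$ with $\nu_i > \nu'_i$, so they impose no condition.

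Next, by \citet{kty:2018} and the strategyproofness result for GDA, applying GDA with $f'$ yields a strategyproof mechanism whose output $Y$ is HM-stable with respect to $f'$, hence in particular fair (as the paper notes, HM-stability implies fairness, and fairness depends only on preferences, not on the constraint). Feasibility under the original $f$ is inherited because $F' \subseteq F$: every vector put in $F'$ satisfies $f(\cvec{i}) = 0$ by construction. For the no empty matching property, I would argue by contradiction. Suppose the output $Y$ is empty and some student $s$ very strongly claims an empty seat of some college $c$: then $(s,c) \in X$ is acceptable for $s$, $\{(s,c)\}$ is feasible under $f$, i.e.\ $\cvec{i(c)} \in F'$. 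Therefore $(s,c) \in Ch_S(\emptyset \cup \{(s,c)\})$ and, since $f'(\cvec{i(c)}) = 0$ and weights are strictly positive, $Ch_C(\emptyset \cup \{(s,c)\}) = \{(s,c)\}$. This means $Y = \emptyset$ is not HM-stable under $f'$, contradicting the output guarantee of GDA.

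The main obstacle I anticipate is simply getting the $\text{M}^{\natural}$-convexity case analysis tidy; everything else follows almost mechanically from the existing GDA theory, provided I am careful that the notion of ``feasible'' invoked in the no empty matching definition is feasibility under $f$ (which is implied by feasibility under $f'$ precisely because we set $F'$ to include exactly those unit vectors that are already $f$-feasible).
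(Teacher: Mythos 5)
Your proposal is correct and follows essentially the same route as the paper's proof: feasibility under $f$ via $F'\subseteq F$, fairness and strategyproofness from the GDA results of \citet{kty:2018} applied to $f'$, and no empty matching by observing that a very strong claim would contradict HM-stability of $Y=\emptyset$. The only difference is that you spell out the heredity and \Mnatural-convexity verification for $F'$, which the paper dismisses with ``clearly''; your case analysis there is accurate.
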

\begin{proof}
For the obtained matching $Y$ by GDA, 
$f'(\nu(Y))=0$ holds. Then, by way of constructing 
$F'$, $f(\nu(Y))=0$ holds, i.e., $Y$ is feasible. 
Since $f'$ induces a hereditary \Mnatural-convex set, 
GDA is strategyproof and fair \cite{kty:2018}. 
Also, as long as 
there exists $(s,c) \in X$ such that 
$c \succ_s \emptyset$ and 
$f'(\nu(\{(s,c)\})=0$ hold, 
$Y\neq \emptyset$ holds.
This is because if $Y=\emptyset$, then 
$(s,c) \in Ch_S(Y\cup\{(s,c)\})$
and $(s,c) \in Ch_C(Y\cup\{(s,c)\})$ hold, 
which violates the fact that 
GDA obtains an HM-stable matching. 
\end{proof}

\section{New fairness concept: Envy-Free up to $k$ peers (EF-$k$)}
In this section, we introduce a weaker fairness concept
called envy-free up to $k$ peers (EF-$k$). 
For matching $Y$ and student $s$, 
let $Ev(Y, s)$ denote 
$\{s' \mid s'\in S, s \text{ has } 
\text{justified envy toward }
s' \text{ in } Y\}$.

\begin{definition}[Envy-free up to $k$ peers]
Matching $Y$ is envy-free up to $k$ peers 
(EF-$k$) if $\forall s \in S$, 
$|Ev(Y,s)|\leq k$ holds. 
\end{definition}
EF-$0$ is equivalent to fairness. 
Any matching is EF-$(n-1)$, where $n=|S|$. 

There are other ways to relax fairness than EF-$k$. One straightforward way is to minimize the total number of justified envies. However, this criterion can be 
\emph{unfair} among students, e.g., one student has many envies while others have only a few. Our definition of EF-$k$ is more egalitarian; it minimizes the envies of the worst student. Other egalitarian criteria are also possible. For example, instead of counting the number of students to whom each student has envy, we can count the colleges at which each student has envy. Also, we can count the number of students by whom each student is envied. Which concept is socially acceptable is difficult to tell. This work is a first step that brings up new research directions in two-sided matching, i.e., how to relax the fairness concept in a socially acceptable way.

We use the following example to show that 
nonwastefulness and EF-$k$ are incompatible 
for any $k < n-1$
under hereditary \Mnatural-convex set constraints. 
\begin{example}
\label{ex:regional}
There are $n$ students and $n$ colleges. 
For each student $s_i$, her preference is:
$c_{i+1} \succ_{s_i} c_{i+2} \succ_{s_i} 
\ldots 
\succ_{s_i} c_n \succ_{s_i} c_1 
\succ_{s_i} \ldots \succ_{s_i} c_{i}$.
For each college $c_i$, its preference is:
$s_i \succ_{c_i} s_{i+1} 
\succ_{c_i} \ldots
\succ_{c_i} s_{n} \succ_{c_i} s_1 
\succ_{c_i} \ldots \succ_{c_i} s_{i-1}$. 
In short, for each student $s_i$, her most preferred 
college $c_{i+1}$ considers her as the least preferred student, 
and her least preferred college $c_i$ considers 
her as the most preferred student. 
Distributional constraints $f$ is defined as:
$f(\nu)=0$ iff $\forall i \in M$, 
$|\nu_i| \leq 1$ and 
$\sum_{i \in M} |\nu_i| \leq n-1$ hold, i.e., each college can accept at most one student, and the total number of students accepted to all colleges 
is at most $n-1$. Clearly, $f$ induces a hereditary 
\Mnatural-convex set. 
\end{example}

\begin{theorem}
\label{thm:no-existence-EF-k}
Under hereditary \Mnatural-convex set constraints, there exists a case that no matching is nonwasteful and 
EF-$k$ for any $k< n-1$.
\end{theorem}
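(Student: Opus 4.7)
The plan is to exploit the highly symmetric cyclic preferences in Example~\ref{ex:regional}: the combination of per-college capacity one and total cap $n-1$ forces every nonwasteful matching to have a rigid structure, and in that structure the unique unmatched student is shown to envy everyone else.

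First I would show that any nonwasteful matching $Y$ matches exactly $n-1$ students, so exactly one college $c_j$ is empty. Indeed, if $|Y|<n-1$, any unmatched $s$ and any empty $c$ yield a feasible augmentation $Y\cup\{(s,c)\}$ (the sum bound and per-college cap are still respected) with $c \succ_s \emptyset$, so $s$ claims $c$---a contradiction. Next, observe that every matched student $s_i$ must prefer her assigned college to $c_j$; otherwise she would claim $c_j$, since swapping her to $c_j$ keeps $|Y|=n-1$ and hence is feasible. Since $s_{j-1}$'s top-ranked college is $c_j$ itself, she has no college preferred over $c_j$ and cannot be matched, so $s_{j-1}$ is the unique unmatched student.

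I would then pin down the whole matching by a cascading argument. Student $s_{j-2}$ has exactly one college preferred over $c_j$, namely $c_{j-1}$, so $s_{j-2}\to c_{j-1}$. Given this, $s_{j-3}$'s colleges preferred over $c_j$ are $\{c_{j-2},c_{j-1}\}$, of which only $c_{j-2}$ remains free, so $s_{j-3}\to c_{j-2}$. Continuing cyclically, $s_i\to c_{i+1}$ for every $i\neq j-1$; in particular every matched student sits at her top choice.

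Finally I would count the envies of $s_{j-1}$. For every matched $s_i\to c_{i+1}$ with $i\neq j-1$, the college $c_{i+1}$ ranks $s_i$ last in its cyclic preference list, so $c_{i+1}$ strictly prefers $s_{j-1}$ over $s_i$; combined with $c_{i+1}\succ_{s_{j-1}}\emptyset$ this gives $s_{j-1}$ justified envy toward $s_i$. Hence $|Ev(Y,s_{j-1})|=n-1$, and $Y$ fails EF-$k$ for every $k<n-1$. The only nontrivial step is the uniqueness argument in the cascade; once the matching is pinned down to ``every matched student at her top choice, while the student opposite the empty college is unmatched,'' the envy count is immediate because the cyclic preferences make the unmatched student maximally desirable to every filled college.
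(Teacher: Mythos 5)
Your proof is correct and follows essentially the same route as the paper's: both argue on Example~\ref{ex:regional} that nonwastefulness forces exactly one student to be unmatched and every matched student to sit at her top choice, whence the unmatched student has justified envy toward all $n-1$ others. The only (cosmetic) difference is that you anchor the cascade on the vacant college $c_j$ and deduce that $s_{j-1}$ is the unmatched student, whereas the paper fixes the unmatched student by symmetry and deduces that the vacant college is $c_2$.
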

\begin{proof}
Consider the setting in Example~\ref{ex:regional}. 
The total number of accepted students is 
at most $n-1$. Also, due to nonwastefulness, 
exactly one student is unassigned to any college. 
By symmetry, without loss of generality, 
let us assume $s_1$ is unassigned. 
Then, there exists exactly one vacant college, i.e., a college to which no student is assigned. 
The vacant college must be $c_2$, since if $c_i$ ($i\neq 2$) 
is vacant, student $s_{i-1}$ claims an empty seat 
of $c_i$. 
Also, $s_n$ must be assigned to $c_1$. Otherwise, 
she is assigned to $c_i$ where $3 \leq i \leq n$; she claims an empty seat of $c_2$. 
Then, $s_{n-1}$ must be assigned to $c_n$. 
Otherwise, she is assigned to $c_i$ where $3 \leq i \leq n-1$; she claims an empty seat of $c_2$. 
By repeating a similar argument, we obtain that
each student $s_i$ ($i\neq 1$) is assigned to 
her most preferred college $c_{i+1}$. 
Then, $s_1$ has justified envy toward $s_2, \ldots, s_n$.
Thus, $|Ev(Y,s_1)|=n-1$ holds. 
\end{proof}

Given Theorem~\ref{thm:no-existence-EF-k}, 
a natural question is the complexity of 
checking the existence of 
a nonwasteful and EF-$k$ matching (for $k<n-1$). 
Let us assume $f$ can be computed in a constant time. 
\ifconference
\else
To examine this complexity, 
we utilize the following lemma.
\begin{lemma}
\label{lem:cut-off} 
Checking whether a fair and nonwasteful matching exists or not is NP-complete, even when distributional constraints form a hereditary \Mnatural-convex set. 
\end{lemma}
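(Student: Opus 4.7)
The plan is to establish both membership in NP and NP-hardness separately.

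For NP membership, I would argue that a candidate matching $Y$ serves as a polynomial-size certificate. A verifier checks feasibility by computing $f(\nu(Y))$ in constant time (by assumption) after computing $\nu(Y)$ in $O(|X|)$ time. Fairness is checked by iterating over all pairs $(s,s')\in S\times S$ and all contracts $(s,c)\in X$, testing whether $c \succ_s Y_s$, $(s',c) \in Y$, and $s \succ_c s'$ all hold; this is clearly polynomial. Nonwastefulness is checked by iterating over all pairs $(s,c)\in S\times C$, verifying for each that either $Y_s \succeq_s c$ or $(Y\setminus Y_s)\cup\{(s,c)\}$ is infeasible. All three checks are polynomial, so the problem is in NP.

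For NP-hardness, I would reduce from a convenient NP-complete problem such as 3-SAT (or, perhaps more cleanly, Exact Cover by 3-Sets or 3-Dimensional Matching, since they already have a combinatorial selection flavor). The plan is to encode the source instance into a matching market whose distributional constraints form a hereditary \Mnatural-convex set—concretely, via a laminar family of regional quotas, since \citet{kty:2018} established that laminar regional quotas induce hereditary \Mnatural-convex sets. For 3-SAT, each variable $x_i$ would be represented by a pair of ``literal colleges'' (for $x_i$ and $\neg x_i$) grouped into a regional quota of size one, forcing a truth assignment. Clause gadgets would consist of a clause student whose acceptable colleges correspond to the three literals satisfying that clause, with preferences and a clause-level regional quota arranged so that nonwastefulness forces the clause student to be matched to a ``true'' literal's college. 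Variable and clause preferences would be chained so that fairness propagates the chosen truth value consistently across clauses: any inconsistent assignment produces justified envy between a variable student and a clause student, precluding fairness.

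The main obstacle is the constraint-design step. Unlike the fully general hereditary case, \Mnatural-convex constraints are highly structured (they correspond to independent sets of matroids on multisets), so encoding boolean logic through regional quotas alone requires careful laminarity arrangements—the constraints must simultaneously (i) be expressible as a laminar family of quotas, (ii) rule out every assignment inconsistent with satisfiability, and (iii) permit a fair and nonwasteful matching whenever a satisfying assignment exists. A secondary technical point is verifying that the gadget preferences do not accidentally allow a ``trivial'' fair and nonwasteful matching (for instance, one that leaves all clause students unmatched) by making each clause student's empty allocation wasteful via the laminar cap structure. Once these are in place, the correctness of the reduction—that a fair and nonwasteful matching exists iff the 3-SAT instance is satisfiable—follows from tracing each forced implication through the gadgets, and the reduction is clearly polynomial in the size of the input formula.
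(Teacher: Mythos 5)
Your NP-membership argument is fine and matches what is needed. The problem is the hardness direction: what you have written is a plan for a reduction, not a reduction. You name the source problem (3-SAT or X3C), describe the intended shape of the gadgets (literal colleges with regional quota one, clause students, chained preferences), and then explicitly flag the central difficulty---designing laminar regional quotas that simultaneously enforce consistency, rule out trivial fair-and-nonwasteful matchings, and remain \Mnatural-convex---as an unresolved obstacle. That constraint-design step is exactly where such reductions live or die; until the gadgets are written down and the iff is verified, the hardness claim is not established. In particular, your worry about a ``trivial'' fair and nonwasteful matching that leaves clause students unmatched is well founded and is not dispatched by the sketch: nonwastefulness only forbids a student from claiming a seat she could feasibly move into, so you must arrange the quotas so that an unmatched clause student always has such a claim, and it is not obvious your laminar caps do this while also blocking inconsistent assignments.

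For comparison, the paper does not build a reduction at all. It observes that fairness plus nonwastefulness is exactly the notion of \emph{strong stability} studied by \citet{aziz:cutoff:2021}, who already proved that deciding the existence of a strongly stable matching is NP-complete under \emph{REG} constraints (regional maximum quotas over mutually disjoint regions). Since disjoint regional quotas are a special case of hereditary \Mnatural-convex set constraints, the hardness transfers immediately. If you want to salvage your approach, either complete the gadget construction in full detail and prove both directions of the equivalence, or take the paper's shortcut and reduce the lemma to the known result.
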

\begin{proof}
\citet{aziz:cutoff:2021} show that checking the 
existence of a 
\emph{strongly stable} matching is NP-complete 
for \emph{REG} constraints. 
Strong stability is equivalent to fairness and 
nonwastefulness. \emph{REG} constraints mean 
regional maximum quotas for mutually disjoint
regions, which is a special 
case of hereditary \Mnatural-convex set constraints. 
Thus, this complexity result carries over to 
hereditary \Mnatural-convex constraints, which is 
more general than \emph{REG}. 
\end{proof}
\fi

\begin{theorem}
\label{thm:EF-k-complexity}
Checking whether an EF-$k$ ($k < n-1$)  and nonwasteful matching exists or not is NP-complete, even when distributional constraints form a hereditary \Mnatural-convex set.
\end{theorem}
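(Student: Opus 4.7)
The plan is to prove the theorem in the standard two parts: NP-membership and NP-hardness, with the bulk of the work collapsing to a simple observation combined with Lemma~\ref{lem:cut-off}.

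For NP-membership I will exhibit a polynomial-time verifier. Given a candidate matching $Y$, the verifier checks (i) that $Y$ is feasible by computing $\nu(Y)$ and evaluating $f$ (constant-time by the assumption stated just before the theorem); (ii) that $Y$ is nonwasteful by iterating over every contract $(s,c)\in X$ and verifying that either $c\not\succ_s Y_s$ or $(Y\setminus Y_s)\cup\{(s,c)\}$ is infeasible; and (iii) that $Y$ is EF-$k$ by computing $|Ev(Y,s)|$ for each student $s$, which reduces to enumerating every pair of a student $s$ and a contract in $Y$ and testing the envy conditions of Definition~\ref{def:fairness}. Each of these checks runs in time polynomial in $|S|$, $|C|$, and $|X|$, which places the problem in NP.

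For NP-hardness I will reduce from the decision problem of Lemma~\ref{lem:cut-off}, which asserts that deciding the existence of a fair and nonwasteful matching is NP-hard even under hereditary \Mnatural-convex set constraints. Given such an instance $I=(S,C,X,\succ_S,\succ_C,f)$ with $|S|=n\ge 2$, the reduction will output the pair $(I,0)$: the same instance together with the parameter $k=0$. The side condition $k<n-1$ then holds because $n\ge 2$, and $f$ still induces a hereditary \Mnatural-convex set, so the restricted class of the problem is respected. By the very definition of $Ev(Y,s)$, requiring $|Ev(Y,s)|\le 0$ for every $s$ is literally the statement that no student has justified envy, i.e., that $Y$ is fair; hence a matching is EF-$0$ if and only if it is fair. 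Consequently, $I$ admits a fair and nonwasteful matching precisely when $(I,0)$ is a yes-instance of the EF-$k$ and nonwasteful existence problem. The transformation is clearly polynomial-time, and the trivial case $n\le 1$ can be handled separately.

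The hard part will really just be making the verifier for NP-membership fully explicit and polynomial; there is no deeper technical obstacle in the hardness direction, because the reduction is essentially a definitional identity (EF-$0$ coincides with fairness) combined with a black-box invocation of Lemma~\ref{lem:cut-off}, and the side condition $k<n-1$ is automatically satisfied by the choice $k=0$ whenever $n\ge 2$, which we may assume without loss of generality.
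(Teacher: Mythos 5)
Your NP-membership argument is fine and matches the paper's (a candidate matching can be verified to be feasible, nonwasteful, and EF-$k$ in polynomial time, given that $f$ is evaluable in constant time). The problem is the hardness direction. Your reduction sets $k=0$ and observes that EF-$0$ coincides with fairness, so it establishes hardness only for the single value $k=0$ --- which is literally a restatement of Lemma~\ref{lem:cut-off}, not a new result. The theorem, read in context (it immediately follows Theorem~\ref{thm:no-existence-EF-k}, which is a ``for any $k<n-1$'' statement, and the entire point of introducing EF-$k$ is to relax fairness in the hope of escaping the $k=0$ obstructions), asserts NP-completeness for \emph{every} fixed $k<n-1$. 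Your argument is silent about every $k\geq 1$, which is exactly the regime the theorem is meant to address: a priori one might hope that allowing each student a budget of $k\geq 1$ envies makes the existence question tractable, and nothing in your proof rules that out. Only under the weakest possible reading --- $k$ supplied as part of the input, with hardness of one slice sufficing --- is your reduction formally adequate, and even then it proves something strictly weaker than what the paper establishes.

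The missing idea is a padding gadget that lifts the $k=0$ hardness to arbitrary $k$. The paper's proof replaces each original student $s_i$ by $k+1$ copies $s_{i,1},\ldots,s_{i,k+1}$ together with $k+1$ private colleges $c_{i,1},\ldots,c_{i,k+1}$ forming a region with regional maximum quota $k$, wiring the preferences cyclically as in Example~\ref{ex:regional} and making every copy prefer its private colleges to all original colleges. Nonwastefulness then forces exactly $k$ of the copies to be matched to their top-choice private colleges, and the one leftover copy necessarily has justified envy toward all $k$ of its siblings; its EF-$k$ budget is therefore already exhausted, so the residual matching of the leftover copies to the original colleges must have \emph{zero} additional envy, i.e., must be fair and nonwasteful for the original instance. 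That equivalence is what makes the reduction work for every $k$. You would need to supply some such construction (and verify that it preserves heredity and \Mnatural-convexity, as the paper's region-based gadget does) for your proof to cover the theorem as stated.
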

\ifconference
\else
\begin{proof}
First, for given matching $Y$, we can check whether $Y$
is EF-$k$ and nonwasteful in polynomial time, so the problem is 
in NP. Next, we show a reduction from the problem of 
checking whether a fair and nonwasteful matching exists or not. 
 Consider an original matching problem instance $I$, where distributional constraints form a hereditary 
 \Mnatural-convex set. 
 We create an instance of an 
 extended market $I'$
 as follows.
 \begin{itemize}
     \item 
 For each college in $I$, we create 
 a corresponding college $c'$ in $I'$. 
 Let $C'$ denote the set of these colleges
 in $I'$. The distributional constraints over
 $C'$ are the same as the original instance $I$. 
\item  For each student $s_i$ in $I$, we create
 $k+1$ students $s_{i,1}, \ldots, s_{i,k+1}$, 
 as well as $k+1$ additional colleges
 $c_{i,1}, \ldots, c_{i,k+1}$.
These additional colleges for $s_i$ 
form a region with regional maximum quota $k$.
Each student in $s_{i,1}, \ldots, s_{i,k+1}$
is a copy of student $s_i$ in the original instance $I$.
\item The preference of additional college
$c_{i,j}$ is defined in the same way as 
Example~\ref{ex:regional}.
More specifically, 
each additional college $c_{i,j}$ can accept only corresponding 
(copied) students $s_{i,1}, \ldots, s_{i,k+1}$, 
and $c_{i,j}$ most prefers $s_{i,j}$ and least 
prefers $s_{i,j-1}$.
\item Each student $s_{i,j}$ prefers any of its 
additional colleges over any original college.
The order of original colleges is the same as 
the original instance $I$. 
The order of her additional colleges is 
defined in the same way as Example~\ref{ex:regional}, i.e., 
$s_{i,j}$ most prefers $c_{i,j+1}$.
\item The preference of each college $c' \in C'$
is defined as follows. 
If $s_i \succ_c s_j$ holds in the original instance, 
$s_{i, t} \succ_{c'} s_{j, t'}$ holds 
for any $t, t' \in \{1, \ldots, k+1\}$. 
The preference over $s_{i,1}, \ldots, s_{i,k+1}$, i.e., 
the copied students of the same original 
student, can be decided arbitrarily. 
 \end{itemize}  
We can observe the following facts. 
Matching $Y$ in the extended instance $I'$ 
is nonwasteful only when for each $i\in N$ 
and copied students $s_{i,1}, \ldots, s_{i,k+1}$, 
exactly $k$ students are assigned to their 
additional colleges $c_{i,1}, \ldots, c_{i, k+1}$. 
Also, these $k$ students must be assigned to 
their first-choice colleges. Thus, the only student
who is not assigned to her additional colleges
has justified envy toward other $k$ copied students. 
Let $S'$ denote the set of students who are not assigned to their additional colleges.
$S'$ will be assigned to $C'$. 
Assume $Y$ is EF-$k$ and 
nonwasteful, then the matching between 
$S'$ and $C'$ within $Y$ must be 
nonwasteful and fair; otherwise, at least one student in $S'$ has justified envy toward more than $k$ students or 
$Y$ is wasteful for the original instance 
(to obtain a matching in the original instance
from $Y$, 
we replacing $s_{i,j}$ to $s_i$ and $c'$ to $c$). 
Also, if there exists a fair and nonwasteful matching in the original instance $I$,  then there exists an EF-$k$ and nonwasteful matching in 
$I'$; the assignment of $s_{i,1}$ is the same as
$s_i$, and the rest of the students are assigned to their favorite additional colleges.  
\end{proof}
\fi

\section{New mechanisms}
In this section, we introduce 
two contrasting strategyproof mechanisms that work
for general hereditary constraints.
The first one (called SD$^*$) satisfies the strongest efficiency property, i.e.,
Pareto efficiency, while it cannot guarantee EF-$k$ for any fixed
$k < n-1$. The second one (called SDA with reserved quotas)
satisfies EF-$k$ for any fixed $k< n-1$,
while it can only guarantee a rather weak efficiency property.
In the next section, we experimentally show that
SD$^*$ can guarantee EF-$k$ where $k$ is much smaller than $n-1$ when
colleges' preferences are similar. 
Furthermore, 
we experimentally show that SDA with reserved quotas
can significantly improve students' welfare compared to a fair (EF-$0$) mechanism 
even when $k$ is very small. 

\subsection{Pareto efficient mechanism}
First, we develop a strategyproof and Pareto efficient mechanism 
based on SD.
For master-list $L$, a pair of students $(s, s')$, and college $c$, 
we say $c$ disagrees with $L$ for $(s, s')$
if $s' \succ_L s$ and 
$s \succ_c s' \succ_c \emptyset$ holds. 
Otherwise, we say $c$ agrees with $L$ for $(s, s')$. 
In short, 
$c$ disagrees with $L$ for $(s, s')$,
when $s'$ is ranked higher than $s$ in $L$, 
both $s$ and $s'$ are acceptable for college $c$, 
and $c$ prefers $s$ over $s'$. 
Assume we use SD based on $L$. 
Then, in obtained matching $Y$, 
if $c$ disagrees with $L$ for $(s, s')$, 
$s$ has a chance to have justified envy toward $s'$ in $c$, 
since $s'$ is chosen before $s$ and can be allocated to $c$, 
while $s$ might not be allocated to $c$. 
On the other hand, if 
$c$ agrees with $L$ for $(s, s')$,
then $s$ never has justified envy toward $s'$ in $c$. 
This is because, the fact that 
$c$ agrees with $L$ for $(s, s')$ means: 
(i) $s$ is ranked higher than $s'$ in $L$, 
(ii) $s$ is ranked lower than $s'$ in $c$, 
or (iii) either $s$ or $s'$ is unacceptable for $c$. 
In each of the above three cases, $s$ cannot have 
justified envy toward $s'$ in $c$. 

Let $d(L,s)$ denote
$|\{s' \mid s'\in S\setminus\{s\}, c \in C, 
c \text{ disagrees with } L \text{ for }\linebreak (s, s')\}|$, 
i.e., $d(L,s)$ counts the number of students such that 
for some college $c$, a disagreement related to $s$ occurs. 

The following theorem holds. 
\begin{theorem}
 \label{thm:ML}   Assume for master-list $L$, 
$\forall s \in S$, $d(L,s)\leq k$ holds. 
Then, SD using $L$ is EF-$k$.
\end{theorem}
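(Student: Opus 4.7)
The plan is to show that for any student $s$, every $s' \in Ev(Y,s)$ corresponds to at least one college $c$ that disagrees with $L$ for $(s,s')$; then $|Ev(Y,s)|$ is bounded by $d(L,s) \leq k$, establishing EF-$k$. The core content is already foreshadowed in the paragraph preceding the theorem (the three agreement cases), so what I need to do is turn that informal discussion into a contradiction argument, with the one nontrivial ingredient being a heredity-based feasibility check.

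Fix the matching $Y$ output by SD using $L$. Suppose for contradiction that $s' \in Ev(Y,s)$ with witnessing college $c$, i.e., $(s,c)$ is acceptable for $s$, $c \succ_s Y_s$, $(s',c) \in Y$, and $s \succ_c s'$. Since $s'$ was placed at $c$ by SD, $c$ finds $s'$ acceptable, so $s \succ_c s' \succ_c \emptyset$. To conclude that $c$ disagrees with $L$ for $(s,s')$, all that remains is to show $s' \succ_L s$.

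This is where the main (and essentially only) technical step enters. Let $Y^{(t)}$ denote the partial matching just after step $t$ of SD, and let $t_s, t_{s'}$ be the steps at which $s,s'$ are processed. Suppose toward contradiction that $s \succ_L s'$, so $t_s < t_{s'}$ and $Y^{(t_s - 1)} \subseteq Y^{(t_{s'} - 1)}$. Since $s'$ is successfully placed at $c$, the set $Y^{(t_{s'}-1)} \cup \{(s',c)\}$ is feasible. Comparing count vectors componentwise gives
\[
\nu\bigl(Y^{(t_s-1)} \cup \{(s,c)\}\bigr) \;\leq\; \nu\bigl(Y^{(t_{s'}-1)} \cup \{(s',c)\}\bigr),
\]
so by heredity $Y^{(t_s - 1)} \cup \{(s,c)\}$ is feasible as well. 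Hence at step $t_s$ the college $c$ was an available choice for $s$, and since SD assigns $s$ to her most preferred feasible acceptable college, $Y_s \succeq_s c$, contradicting $c \succ_s Y_s$. Therefore $s' \succ_L s$, and $c$ disagrees with $L$ for $(s,s')$.

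Mapping each $s' \in Ev(Y,s)$ to (any) witnessing college $c$ shows
\[
Ev(Y,s) \;\subseteq\; \{\, s' \in S\setminus\{s\} \;\mid\; \exists\, c \in C \text{ with } c \text{ disagrees with } L \text{ for } (s,s')\,\},
\]
so $|Ev(Y,s)| \leq d(L,s) \leq k$ for every $s$, yielding EF-$k$. I expect the only step requiring any real care is the monotonicity/heredity comparison of the two partial matchings; the rest is bookkeeping against the three agreement cases already enumerated before the theorem.
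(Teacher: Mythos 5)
Your proof is correct and follows essentially the same route as the paper's: map each $s' \in Ev(Y,s)$ to a disagreement of some college with $L$, so $|Ev(Y,s)| \leq d(L,s) \leq k$. The only difference is that the paper simply asserts the key fact $s' \succ_L s$ for every envied $s'$, whereas you justify it explicitly via the heredity/monotonicity comparison of the partial matchings, which is a welcome (and valid) filling-in of the omitted step.
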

\begin{proof}
Assume, for the sake of contradiction, 
that in obtained matching $Y$, there exists student $s$ with $|Ev(Y, s)| > k$.
Then, for each $s' \in Ev(Y,s)$, 
we have 
(i) $s' \succ_L s$, and 
(ii) for $(s', c) \in Y$, $s \succ_c s' \succ_c \emptyset$. 
Thus, $c$ disagrees $L$ for $(s, s')$. 
This is true for each $s' \in Ev(Y,s)$.
Then, $d(L,s) > k$ holds, a contradiction. 
\end{proof}

Theorem~\ref{thm:ML} means that if we can choose a good master-list $L$, 
such that $\max_{s \in S} d(L,s)$ is small, e.g. at most $k$, the obtained matching is 
guaranteed to be EF-$k$. 
Note that this guarantee holds independently from the actual distributional constraints and 
students' preferences; $k$ can be computed using colleges' preference profile $\succ_C$ only. 
Thus, for given students' preference $\succ_S$,
the obtained matching can be EF-$k'$ for $k'$ that is much smaller than $k$ guaranteed by Theorem 6.1; see the experimental results that clarify this in the next section.

Let us examine the problem of finding an optimal master-list 
(in terms of minimizing $\max_{s \in S} d(L,s)$) 
for given colleges' preference profile $\succ_C$. 
\begin{theorem}
\label{thm:choosing-ML}
For given colleges' preference profile $\succ_C$, 
computing master-list $L$, which minimizes $\max_{s \in S} d(L,s)$
can be done in polynomial time. 
\end{theorem}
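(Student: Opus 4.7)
The plan is to reduce the problem to a minimum-maximum forward out-degree ordering on an auxiliary directed graph and solve it by a degeneracy-style greedy peeling. First, I would construct a directed graph $G=(S,E)$ by setting $(s,s')\in E$ iff there exists a college $c$ with $s\succ_c s'\succ_c \emptyset$; this takes polynomial time and depends only on $\succ_C$. Writing $N^+(s):=\{s' : (s,s')\in E\}$, the quantity $d(L,s)$ is precisely the number of out-neighbors of $s$ in $G$ placed above $s$ in $L$. So the task becomes: find a linear order of $S$ that minimizes, over all $s$, the count of out-neighbors above $s$.

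Next, I would propose the following greedy algorithm, which builds $L$ from the bottom up. Initialize $R\leftarrow S$; while $R\neq \emptyset$, choose $s\in R$ that minimizes $|N^+(s)\cap (R\setminus\{s\})|$, place $s$ at the lowest remaining position of $L$, and set $R\leftarrow R\setminus\{s\}$. At the moment $s$ is placed, the students that will end up above $s$ in the final $L$ are exactly $R\setminus\{s\}$, so $d(L,s)=|N^+(s)\cap (R\setminus\{s\})|$ is determined on the spot. Maintaining residual out-degrees incrementally, the whole procedure runs in $O(n^2)$ time, plus $O(n^2 m)$ for constructing $G$.

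Finally, I would establish optimality via a min-max characterization. Define
\[
k^\ast \;=\; \max_{\emptyset \neq T \subseteq S}\;\min_{s\in T}\;|N^+(s)\cap (T\setminus\{s\})|.
\]
For the lower bound, fix any $T$ witnessing $k^\ast$; in an arbitrary ordering $L$, let $s^\ast$ be the bottom-most member of $T$. Then every other element of $T$ lies above $s^\ast$, so $d(L,s^\ast)\geq |N^+(s^\ast)\cap (T\setminus\{s^\ast\})|\geq k^\ast$. For the upper bound, at each greedy step the current $R$ is itself a non-empty subset, hence a candidate for $T$; thus there exists $s\in R$ with $|N^+(s)\cap (R\setminus\{s\})|\leq k^\ast$, and the greedy selects exactly such an $s$, giving $d(L,s)\leq k^\ast$. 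Combining both bounds, the greedy ordering achieves $\max_{s\in S} d(L,s)=k^\ast$, which is optimal.

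The principal obstacle is setting up the certificate $k^\ast$ correctly and establishing that the greedy always attains it; once that is in hand, the lower bound (via the bottom-most vertex of a tight subset) and the upper bound (via the residual-minimum vertex at each peeling step) mirror the classical Szekeres--Wilf degeneracy argument, adapted here to directed out-neighborhoods.
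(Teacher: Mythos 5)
Your proposal is correct and follows essentially the same route as the paper's proof: the same disagreement digraph on students, the same bottom-up greedy that repeatedly extracts the vertex of minimum residual out-degree, and the same polynomial running-time bound. Your explicit min--max certificate $k^\ast$ in fact makes the optimality argument more airtight than the paper's, which only spells out the lower bound $\max_{s}d(L,s)\geq\min_{s}|O_s|$ for the first (bottom-most) choice and then asserts that the recursive choice remains safe.
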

\ifconference
\else
\begin{proof}
Let us first introduce a graphical representation of the above optimization problem. 
Consider a directed graph $G=(S,E)$, where each student is a vertex. 
For a pair of students $s$ and $s'$, if there exists college $c$ s.t. 
$s \succ_c s' \succ_c \emptyset$ holds, we add a directed edge 
$(s, s')$. This means that to make $c$ agree with the obtained master-list for 
$(s, s')$, the master-list must rank $s$ higher than $s'$. 
Then, for $G=(S,E)$ and master-list $L$, 
$d(L,s)$ is equal to the number of outgoing edges from $s$ toward 
any of higher-ranked students than $s$ in $L$. 
For $s$, let $O_s$ denote the set of outgoing edges from $s$. 
Clearly, for any $L$, $d(L,s) \leq |O_s|$ holds. 
Also, $d(L,s) = |O_s|$ holds when $s$ is ranked lowest in $L$. 
This implies $\max_{s \in S} d(L,s) \geq \min_{s\in S} |O_s|$ holds, 
i.e., the optimal $k$ cannot be smaller than 
$\min_{s\in S} |O_s|$. 
This is because some student $s$ must be ranked lowest in $L$, and 
$d(L,s)=|O_s|$ holds. 
Then, when choosing the student who should be ranked lowest in $L$, 
we can safely choose $s$ with the smallest $|O_s|$ to guarantee $L$'s optimality. 
Thus, the following greedy algorithm obtains an optimal master-list $L$ 
(as well as $\max_{s \in S} d(L, s)$).
\begin{enumerate}
    \item For given graph $G=(V,E)$ (where $V=S$), 
    set $k \leftarrow 0$, and $L$ to an empty list. 
    \item If $V=\emptyset$, return $L$ and $k$. 
    \item Choose $s = \arg \min_{s \in V} |O_s|$. Add $s$ to the top of $L$.
    $k\leftarrow \max(k, |O_s|)$.
    Remove $s$ and all edges related to $s$ from $G$. Go to (2). 
\end{enumerate}
Clearly, the complexity of this greedy algorithm is 
$O(|V||E|)$. 
\end{proof}
\fi

Let us call SD mechanism using optimal $L$ as SD$^*$. 
SD$^*$ is strategyproof and Pareto efficient. 
When we apply SD$^*$ to the matching instance 
presented in Example~\ref{ex:regional}, 
the above algorithm returns $L$ with
$\max_{s \in S} d(L,s)=n-1$ and 
the obtained matching cannot be
EF-$k$ for any $k< n-1$. 
In the next section, we show that 
SD$^*$ can be EF-$k$ for smaller $k$ when 
colleges' preferences are similar. 

Let us examine situations where 
SD$^*$ can be used in practice. 
Assume there exists an authority who decides a matching based on colleges'/students' preferences. The authority is allowed to override colleges' preferences to some extent in order to improve students' welfare. More specifically, the authority can use its own ordering among students to decide the matching, where the ordering is chosen such that it is as close as possible to colleges' preferences. Our SD$^*$ is based on this idea, which uses ordering $L$ that minimizes $k=\max_{s \in S} d(L,s)$.
The obtained matching is guaranteed to be EF-$k$. There can be alternative minimization criteria for choosing $L$, e.g., minimizing the sum of Kendall tau distances (the number of pairwise disagreements). However, this optimization problem is computationally hard~\cite{bartholdi:1989} and can be \emph{unfair} among students.

\subsection{EF-$k$ mechanism}
Next, we develop a strategyproof and EF-$k$ mechanism 
for any given $k\leq n-1$.
First, let us define the standard Sample and Deferred 
Acceptance (SDA) mechanism. This mechanism is 
developed by \citet{liu:spr2023} for a special case for hereditary constraints where the maximum quota 
of each college is determined by allocating indivisible 
resources to each college. 
The basic idea of SDA is to combine SD and ACDA. 
One major limitation of ACDA is that we need to determine the maximal feasible vector 
$\nu^*$ (which determines the maximum  quota of each college)
independently from students' preferences. As a result, the maximum quotas of popular colleges can be low, 
while those of unpopular colleges can be high. 
In the standard SDA, 
first, we choose a subset of students $S' \subseteq S$, 
where $|S'|=k$. We call $S'$ \emph{sampled} students, and 
$S\setminus S'$ \emph{regular} students. 
We assign sampled students using SD. 
Assume the obtained matching for sampled students is $Y'$. 
Then, we choose a  maximal feasible vector $\nu^*$ based on 
the preferences of sampled students. 
\citet{liu:spr2023} present several 
alternative ways to choose $\nu^*$. In this paper, as described later, 
we apply a simulation-based method using copies of sampled students, which is shown to be most effective in \cite{liu:spr2023}.
Then, we apply ACDA for regular students, where 
maximum quota $q_{c_i}$ for each college $c_i$ is given as 
$\nu^*_i - |Y'_{c_i}|$. 

The standard SDA is strategyproof. It is also EF-$k$, since 
for each sampled student $s$, she has justified envy 
only toward another sampled student assigned before her. 
Thus, $|Ev(Y,s)|\leq k-1$ holds. 
Also, since DA is fair, for regular student $s$, 
she has justified envy only toward sampled students. 
Thus, $|Ev(Y,s)|\leq k$ holds. 

However, if the preferences of sampled students are 
completely different from the preferences of regular 
students, obtained $\nu^*$ can be bad for regular 
students. As a result, even no vacant college property
is not satisfied. 
We can assume SDA satisfies no empty matching property.
No empty matching property is violated only in an exceptional 
case where all sampled students assume all colleges unacceptable. 
In such a case, we can choose additional sampled students until 
at least one student is assigned to some college. 

We propose a generalized version of SDA, such that 
no vacant college property is satisfied under a mild 
assumption. The basic idea is that, since 
there exists a chance that 
the preferences of sampled students are 
completely different from those of  regular 
students, we reserve some seats for each college 
even if the college seems unpopular based on 
the preferences of sampled students. 
Let $\widehat{\nu} = (\widehat{\nu}_1, 
\ldots, \widehat{\nu}_m)$ be 
\emph{reserved quotas}, where 
$\widehat{\nu}_i \geq 0$ for each $i \in M$, 
and $f(\widehat{\nu})=0$ holds. 
The goal of the reserved quotas $\widehat{\nu}$
is to guarantee that each college $c_i$ is guaranteed 
to accept at least $\widehat{\nu}_i$ students, as long 
as enough students hope to be assigned to $c_i$, 
even if $c_i$ seems unpopular among sampled students. 

For two $m$-element vectors $\nu$ and $\nu'$, 
let $\nu \vee \nu'$ denote the element-wise maximum, 
i.e., $\nu \vee \nu' = (\max(\nu_1, \nu'_1), 
\ldots, \max(\nu_m, \nu'_m))$. 

First, let us define SD with reserved quotas $\widehat{\nu}$. 
As standard SD, we assign students sequentially based 
on master-list $L$. Let $Y$ denote the assignment obtained 
so far. The current student can be assigned to 
$c_i$, as long as $f((\nu(Y) + \cvec{i})\vee \widehat{\nu})=0$ holds. In short, the current student
$s$ can be assigned to $c_i$, if  
$c_i$ can still accept one more student, assuming each college 
$c_j$ will be assigned at least $\widehat{\nu}_j$ students.

Then, SDA with reserved quotas $\widehat{\nu}$ is 
defined as follows. 
Choose $k$ sampled students (the remaining students are regular students). 
They are assigned by SD with reserved quotas $\widehat{\nu}$. 
Let $Y'$ denote the matching for sampled students. 
Then, obtain a matching $Y''$, by further assigning 
multiple virtual students, each of which is a copy of sampled students by SD with
reserved quotas, until no more student can be assigned. 
More specifically, let us assume sampled students are $s_1, \ldots, s_k$. 
We create virtual students ${s}_{i,1},{s}_{i,2},\ldots$, 
which are copies of each sampled student ${s}_i$.
Then, after sampled students are assigned. We assign these virtual students in a round-robin order, i.e., 
${s}_{1,1}, {s}_{2,1}, \ldots, {s}_{k,1}, 
 {s}_{1,2}, {s}_{2,2}, \ldots, {s}_{k,2}, 
  {s}_{1,3}, {s}_{2,3}, \ldots, {s}_{k,3}, \ldots$. 
Note that this procedure is just for choosing appropriate $\nu^*$; in reality, 
these virtual students are not allocated to 
any college. 
Then, we choose maximal feasible vector 
$\nu^*$ such that $\nu^* \geq \nu(Y'') \vee \widehat{\nu}$ holds.
For each college $c_i$, we set its maximum quota
$q_{c_i}$ as $\nu^*_i - |Y'_{c_i}|$, and 
run ACDA for regular students. 

\begin{theorem}
Assume for $\widehat{\nu}$, 
$f(\widehat{\nu})=0$ holds, and 
$\forall i \in M$, such that $f(\cvec{i})=0$ holds, 
$\widehat{\nu}_i \geq 1$ also holds.
Then, SDA with reserved quotas $\widehat{\nu}$ and $k$-sampled students is 
strategyproof, EF-$k$, and satisfies no vacant college property. 
\end{theorem}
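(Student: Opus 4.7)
The plan is to establish strategyproofness, EF-$k$, and the no vacant college property in turn. Strategyproofness follows from decomposing the mechanism into SD with reserved quotas for sampled students, followed by ACDA for regular students. Sampled students' outcomes depend only on SD with reserved quotas, a strategyproof variant of serial dictatorship in which each student greedily picks her top feasible choice. Regular students' outcomes depend on ACDA, whose capacities $\nu^*_i - |Y'_{c_i}|$ are determined by sampled students' preferences (via the virtual-student simulation) and $\widehat{\nu}$, but not by regular students' own preferences; hence ACDA is strategyproof for them as well.

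For EF-$k$ the main claim is that each sampled student $s$ can only envy sampled students assigned before her in the master list (at most $k-1$ students), while each regular student can only envy sampled students (at most $k$). The key lemma is: if sampled $s$ failed to be assigned to college $c_i$ during SD with reserved quotas, then no student matched afterwards is placed at $c_i$. For a sampled student $s''$ appearing after $s$, componentwise monotonicity $\nu(Y^{(s'')}) \geq \nu(Y^{(s)})$ and heredity propagate the infeasibility of $(\nu(Y^{(s)}) + \cvec{i}) \vee \widehat{\nu}$ to $(\nu(Y^{(s'')}) + \cvec{i}) \vee \widehat{\nu}$, so $s''$ cannot be placed at $c_i$ either. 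Consequently $\nu(Y')_i = \nu(Y^{(s)})_i$ at the end of the sampled phase. For a regular student $s'$ placed at $c_i$ by ACDA, the capacity $\nu^*_i - |Y'_{c_i}| \geq 1$ gives $\nu^*_i \geq \nu(Y^{(s)})_i + 1$; combined with $\nu^* \geq \widehat{\nu}$ and $\nu^* \geq \nu(Y'') \geq \nu(Y^{(s)})$ componentwise, the vector $\nu^*$ dominates $(\nu(Y^{(s)}) + \cvec{i}) \vee \widehat{\nu}$, contradicting the feasibility of $\nu^*$. Regular students do not envy one another because ACDA is fair on the regular-student subproblem.

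For no vacant college I would argue by contradiction: suppose $Y_s = \emptyset$, $Y_{c_i} = \emptyset$, $(s, c_i)$ is acceptable, and $Y \cup \{(s, c_i)\}$ is feasible. Feasibility and heredity give $f(\cvec{i}) = 0$, so by hypothesis $\widehat{\nu}_i \geq 1$. A straightforward induction shows that SD with reserved quotas maintains the invariant $f(\nu(Y^{(t)}) \vee \widehat{\nu}) = 0$ throughout its execution, starting from $f(\widehat{\nu}) = 0$. If $s$ is sampled, then $\nu(Y^{(s)})_i = 0$ at her turn (since $Y_{c_i} = \emptyset$), so $(\nu(Y^{(s)}) + \cvec{i}) \vee \widehat{\nu}$ has $i$-component $\widehat{\nu}_i$ and coincides with $\nu(Y^{(s)}) \vee \widehat{\nu}$, which is feasible; hence $c_i$ is a feasible choice for $s$, and SD would assign her to $c_i$ or a more preferred college, contradicting $Y_s = \emptyset$. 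If $s$ is regular, the ACDA capacity for $c_i$ is $\nu^*_i \geq \widehat{\nu}_i \geq 1$; since $s$ is unmatched and $c_i$ is acceptable, she must have proposed to $c_i$ at some point and been rejected, yet DA's monotonicity guarantees that at least one (higher-priority) student is retained at $c_i$ in the final matching, contradicting $Y_{c_i} = \emptyset$.

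The main obstacle is the regular-student sub-case of the EF-$k$ argument: one must carefully compose the heredity of $f$ with the element-wise maxima involving $\widehat{\nu}$ while tracking how $\nu^*$ is simultaneously constrained by $\nu(Y'')$ and $\widehat{\nu}$; once the monotone chain $\nu^* \geq (\nu(Y^{(s)}) + \cvec{i}) \vee \widehat{\nu}$ is in place, heredity closes the argument immediately.
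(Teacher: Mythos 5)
Your proposal is correct and follows essentially the same route as the paper: strategyproofness and EF-$k$ are inherited from the sampled-phase SD / regular-phase ACDA decomposition (the paper simply asserts these as ``clear,'' whereas you supply the underlying heredity--monotonicity lemma showing a college denied to a sampled student receives no later student), and the no-vacant-college argument via $\nu^*_i \geq \widehat{\nu}_i \geq 1$ matches the paper's. The extra detail you provide is a faithful elaboration rather than a different approach.
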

\begin{proof}
It is clear even after the above modifications, 
SDA with reserved quotas $\widehat{\nu}$ is still 
strategyproof and EF-$k$. 

We show that it also satisfies no vacant college property. 
Assume, for the sake of contradiction, that obtained matching 
$Y$ does not satisfy no vacant college property, 
i.e., student $s$ strongly claims an empty seat of $c_i$, 
while $Y_s=\emptyset$ and $Y_{c_i}=\emptyset$. 
Since $Y$ is obtained by SDA with reserved quotas 
$\nu^*$, 
$\nu(Y) \leq \nu^*$ holds. 
Also, $Y_{c_i}=\emptyset$ and $\nu^*_i \geq \widehat{\nu}_i 
\geq 1$ holds. 
However, this fact means that if 
$s$ applies to $c_i$, she must be accepted to $c_i$ 
(either in SD with reserved quotas or ACDA). This violates the fact that $Y_s=\emptyset$.
\end{proof}

Let us examine situations where 
SDA can be used in practice. 
Assume there exist $k$ 
\emph{distinguished} students, e.g., they have excellent achievements in sports / volunteer works, etc., they are from financially difficult families / minority groups, or even chosen by lottery. If giving them priority in college administration is socially acceptable, we can use these \emph{distinguished} students as sampled students in SDA. 
Then, the outcome is guaranteed to be EF-$k$. 

\section{Experimental Evaluation}
\label{sec:evaluation}
\def\ratio{0.229}
\def\spaceratio{0.1}
\def\trimspacebetweencaption{\vspace{-8mm}}
\def\trimspaceaftercaption{\vspace{-4mm}}
\begin{figure*}[t]
	    \begin{minipage}[t]{\ratio\textwidth}
	        \begin{center}
         \vspace{\ratio\textwidth}
        \includegraphics[scale=0.14]{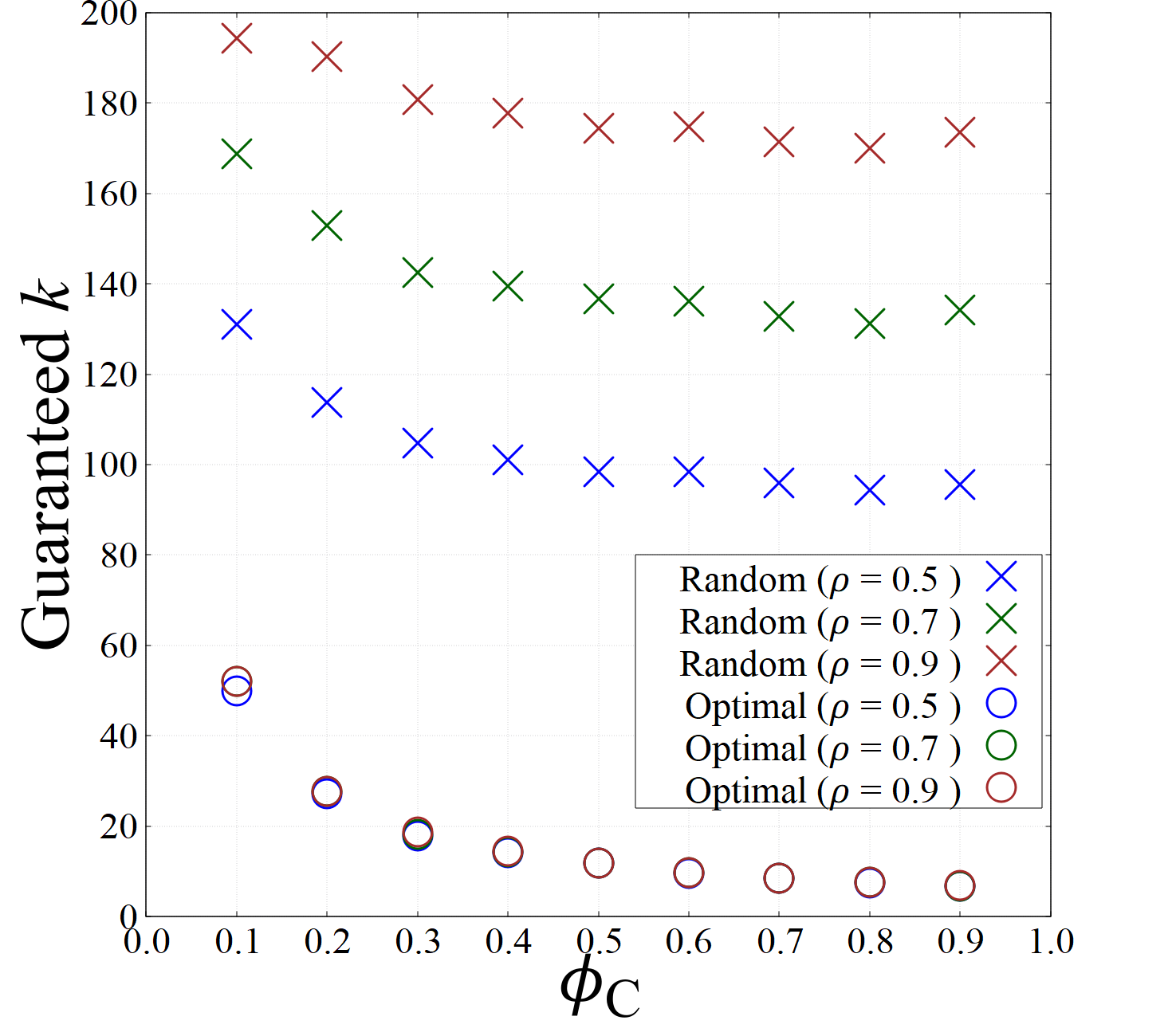}
\trimspacebetweencaption
	        \caption{Guaranteed $k$ for optimal/random master-list}
	        \label{fig:sd-k-guaranteed}
	        \end{center}
     \end{minipage}
     \hspace{\spaceratio\textwidth}
	    \begin{minipage}[t]{\ratio\textwidth}
	        \begin{center}
      \vspace{\ratio\textwidth}
        \includegraphics[scale=0.25]{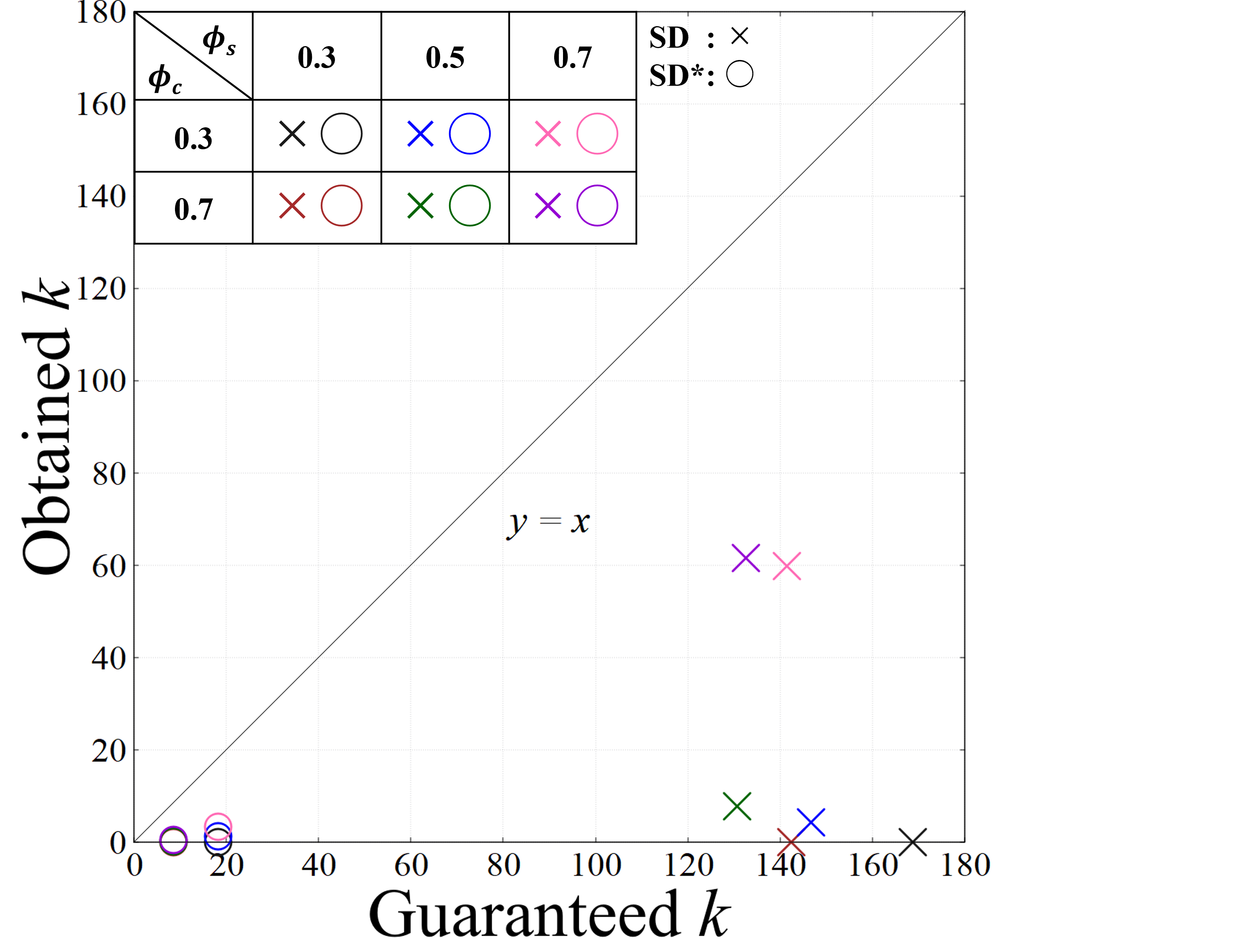}
        \trimspacebetweencaption
	        \caption{Comparison between obtained/guaranteed $k$ for SD$^*$/SD}
	        \label{fig:sd-k-actual}
	        \end{center}         
	    \end{minipage}
     \hspace{\spaceratio\textwidth}     
	    \begin{minipage}[t]{\ratio\textwidth}
	        \begin{center}
      \vspace{\ratio\textwidth}
      \includegraphics[scale=0.14]{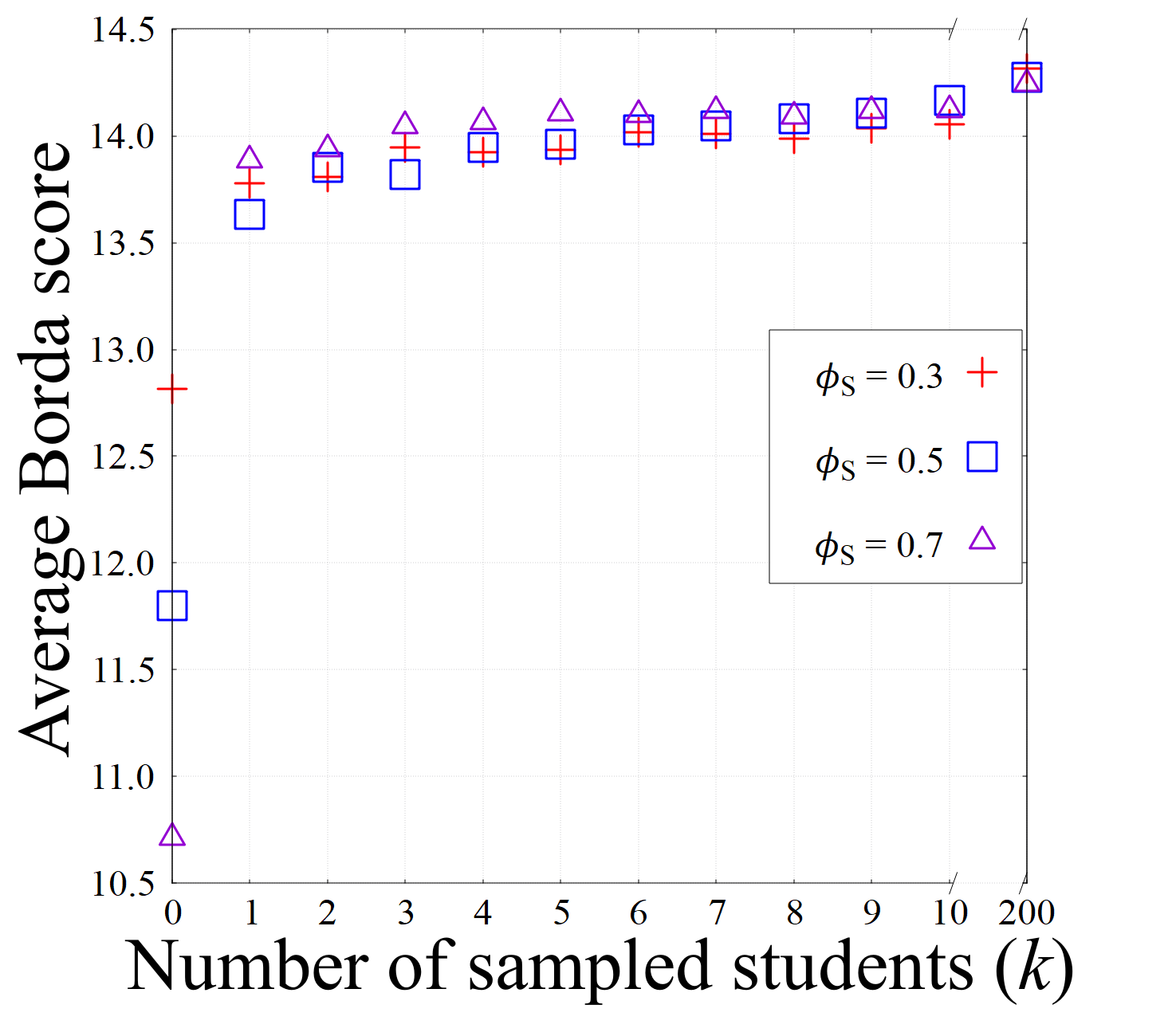}
      \trimspacebetweencaption
	        \caption{Average Borda score for SDA with reserved quotas}
	        \label{fig:sda-k-borda}
	        \end{center}         
	    \end{minipage}
\trimspaceaftercaption
 \end{figure*}
First, we show the level of $k$ that SD$^*$ can be guaranteed by using an optimal master-list.
We set the number of students $n$ to 200 and 
the number of colleges $m$ to 20. 
We generate the preference of each college $c$ using the Mallows model%
~\cite{drummond:ijcai:2013,lu:jmlr:2014,mallows1957non};
college preference $\succ_c$ is drawn with probability:
$\Pr(\succ_c)=\frac{\exp (-\phi_C \cdot \delta(\succ_c,
	\succ_{\widehat{c}}))}{\sum_{\succ'_c} \exp(-\phi_c \cdot \delta(\succ'_c,
	\succ_{\widehat{c}}))}.$
Here $\phi_C \in \mathbf{R}_{+}$ denotes the spread parameter for colleges, $\succ_{\widehat{c}}$ is a
central preference uniformly randomly chosen from all possible
preferences,
and $\delta(\succ_c, \succ_{\widehat{c}})$ represents the Kendall tau
distance, which is the number of pairwise inversions between $\succ_c$ and $\succ_{\widehat{c}}$.
Intuitively, colleges' preferences are distributed around a central
preference with
spread parameter $\phi_C$. 
When $\phi_C= 0$, the Mallows model becomes identical to the uniform distribution,
while increasing $\phi_C$ leads to convergence towards a constant
distribution, yielding $\succ_{\widehat c}$. 
Initially, each $\succ_c$ does not include
$\emptyset$. We insert $\emptyset$ at the position
$\lfloor \rho \cdot n \rfloor$ (where $0 < \rho < 1$).

Figure~\ref{fig:sd-k-guaranteed} shows the guaranteed $k$ when using 
an optimal master-list by varying the
spread parameter $\phi_C$ and $\rho$. 
Each data point is an average of 10 instances. 
We also show the result when the master-list
is randomly chosen.
We can see that when the spread parameter becomes larger 
(colleges' preferences become more similar), 
SD$^*$ can guarantee EF-$k$ for smaller $k$.
For example, $k$ becomes less than 5\% of $n$ when $\phi_C$ is
0.6. We can see $\rho$ has almost no effect on SD$^*$, while
it significantly affects randomly selected master-lists. 

Next, we apply SD$^*$ to each matching market and measure
the obtained level of $k$ that SD$^*$ achieves.
We consider the following distributional constraints~\cite{liu:spr2023}.
There exists a set of indivisible resources
$R=\{r_1, \ldots, r_{|R|}\}$. Each resource $r$ has its capacity
$q_r \in\mathbb{N}_{>0}$.
For each resource $r$, its college compatibility list $T_r$ is defined;
resource $r$ can be allocated to exactly one college
in $T_r\subseteq C$. 
Mapping $\mu$ denotes one possible allocation of resources to colleges,
i.e., $\mu:R\rightarrow C$ maps each resource $r$ to a college
$\mu(r) \in T_r$.
For given allocation $\mu$, the maximum quota of college $c$ is given
as $q_{\mu}(c) = \sum_{r: \mu(r)=c} q_r$, i.e., the maximum quota of each college is endogenously determined as the sum of the capacities of
allocated resources. 
We assume $f(\nu)=0$ if there exists $\mu$ s.t.
$\nu_i \leq q_{\mu}(c_i)$ holds for all $i \in M$.
Each market has $|R| = 100$ resources.
For each resource $r$, we generate $T_r$ such that
each college $c$ is included in $T_r$ with probability $0.3$.
There are 40, 20, and 40 resources with capacity 1, 2, and 3, respectively; thus 
the total capacity of colleges is equal to $n$. 
We generate each student's preference in a similar way as a college's preference, i.e., 
we utilize the Mallows model with spread parameter $\phi_S$.
We do not apply $\rho$ for students; each student considers all colleges acceptable.

Figure~\ref{fig:sd-k-actual} shows the average of 10 instances.
The $x$-axis shows the guaranteed $k$ and the $y$-axis shows the actually obtained $k$. We set colleges' spread parameter $\phi_C$ to $0.3$ and $0.7$, 
and students' spread parameter $\phi_S$ to $0.3$, $0.5$, and $0.7$.
$\rho$ is set to 0.7. 
By definition, each data point must be located in the lower-right half.
The result shows the actually obtained $k$ is much smaller than
the guaranteed $k$.
In particular, for SD$^*$, it is between 0 and 4. 
For SD, we can see that when $\phi_S$ becomes larger,
the competition among students becomes more intense.
As a result, more students tend to have justified envy. 

Next, we evaluate SDA with reserved quotas. By varying $k$,
it can be identical to ACDA (when $k=0$) and SD$^*$ (when
$k=n$), assuming we use the same master-list as SD$^*$ and the same reserved quotas.
Figure~\ref{fig:sda-k-borda} shows 
the average Borda score of the students
varying $k$ and the students' spread parameter $\phi_S$.
If a student is assigned to her $i$-th choice
college, her Borda score is $m-i+1$.
We fix the colleges' spread parameter $\phi_C$ to 0.7 and $\rho$ to 0.7.
We set reserved quotas $\widehat{\nu}$ to $(1, 1, \ldots, 1)$.
Each data point represents an average of 10 instances. 
In this setting, SDA with $n$ sampled students (which is identical to SD$^*$) 
guarantees EF-$k$ for $k=9$ in average. 
%
%
The average Borda score significantly improves 
as $k$ increases from the case where $k = 0$.
Note that increasing the average Borda score by one is significant; each student must be assigned to a strictly better college. 
The difference between $k=0$ (where SDA is identical to ACDA) and $k=1$
becomes larger when $\phi_S$ becomes larger, i.e., 
when students' preferences are similar. 
We can see that SDA achieves a high degree of fairness and 
efficiency with a few sampled students.

In summary, SD$^*$ is much fairer than SD with a randomly selected master-list, and can attain EF-$k'$ for $k'$ that is much smaller than $k$ guaranteed by Theorem 6.1.
Also, SDA with reserved quotas is much more efficient than ACDA, and attains very good fairness at the expense of a little efficiency compared to SD*.
\section{Conclusions and future works}
When distributional constraints are imposed in two-sided matching, there exists a trade-off 
between fairness and efficiency. 
We clarified the tight boundaries 
on whether a strategyproof and fair mechanism can satisfy certain
efficiency properties for each class of constraints. 
We also established a new fairness requirement called EF-$k$. 
We examined theoretical properties related to EF-$k$, and 
developed two contrasting strategyproof mechanisms 
that work for general hereditary constraints. 
We evaluated the performance of these mechanisms via computer simulation.
We believe EF-$k$ is significant since it brings up many new research topics in constrained matching literature; there remain many open questions related to EF-$k$. 
For example, can any strategyproof mechanism guarantee 
EF-$k$ for some fixed $k$ in conjunction with some efficiency property (which is stronger than 
no vacant college property, e.g., weak nonwastefulness)? 
Furthermore, 
there exists another mechanism called Adaptive DA \cite{goto:17} that works for any hereditary constraints. 
Comparing this mechanism with our newly proposed mechanisms is our immediate future work.

\begin{acks}
We would like to thank anonymous reviewers for their valuable comments. 
This work was partially supported by JST ERATO Grant Number JPMJER2301, and JSPS
KAKENHI Grant Numbers JP21H04979 and JP20H00609, Japan.
\end{acks}

\clearpage
\bibliographystyle{ACM-Reference-Format} 
\bibliography{matching}

\ifconference
\fi

\end{document}
